\documentclass[12pt,reqno,pre,longbibliography]{amsart}
\usepackage{amsfonts}
\usepackage{graphicx}
\usepackage{amscd}
\usepackage{amsmath,amssymb}
\usepackage{epsfig}
\usepackage[usenames]{color}
\usepackage{subfigure}
\usepackage[english]{babel}

\setcounter{MaxMatrixCols}{10}

\providecommand{\U}[1]{\protect\rule{.1in}{.1in}}
\providecommand{\U}[1]{\protect\rule{.1in}{.1in}}
\textheight=8.9in \textwidth=7in \headheight=8pt \topmargin=0pt
\oddsidemargin=-.25in \evensidemargin=-.25in
\parskip=6pt plus 2pt minus 2pt
\allowdisplaybreaks
\newtheorem{theorem}{Theorem}

\newtheorem{proposition}{Proposition}
\theoremstyle{plain}
\newtheorem{acknowledgement}{Acknowledgement}

\numberwithin{equation}{section}

\DeclareMathOperator{\sech}{sech}

\begin{document}
\title[Generalized Coupled Systems]{On explicit solutions for coupled reaction-diffusion and  Burgers-type equations with variable coefficients through a Riccati System}
\author{José M. Escorcia }
\address{Escuela de Ciencias Aplicadas e Ingenier\'ia, Universidad EAFIT, Carrera 49 No. 7 Sur-50, Medellín 050022, Colombia.}
\email{jmescorcit@eafit.edu.co}
\author{E. Suazo}
\address{School of Mathematical and Statistical Sciences, University of
Texas Rio Grande Valley, 1201 W. University Drive, Edinburg, Texas,
78539-2999.}
\email{erwin.suazo@utrgv.edu}

\date{\today }

\subjclass{Primary 81Q05, 35C05. Secondary 42A38}

\begin{abstract}
This work is concerned with the study of  explicit solutions for generalized coupled reaction-diffusion and Burgers-type systems  with variable coefficients. Including nonlinear models with variable coefficients such as diffusive Lotka-Volterra model, the Gray-Scott model, the Burgers equations. The equations' integrability (via the explicit formulation of the solutions) is accomplished by using similarity transformations and requiring that the coefficients fulfill a Riccati system. We present traveling wave type solutions, as well as solutions with more complex dynamics and relevant features such as bending. A Mathematica file has been prepared as supplementary material, verifying the Riccati systems used in the construction of the solutions.  \\

\textbf{Keywords:} 
Coupled reaction-diffusion equations, Coupled Burgers equations, Traveling wave solution, Similarity transformations, and Riccati system.

\end{abstract}

\maketitle

\section{Introduction}

A large number of problems in engineering and applied sciences are modeled by partial differential equations (PDEs) or systems of PDEs. In this regard, the study of such PDEs or systems of PDEs is a constantly evolving area due to their importance in several fields of knowledge. One type of differential equations that has attracted the attention of researchers in recent years are those of reaction-diffusion. In biology, the reaction-diffusion systems explain the pattern of stripes of leopards, snakes, zebras, and jaguars \cite{Liu2006,HIROTO2002,Turing1990} and creates a model for replicating the hepatitis B virus \cite{Ghaemi2023,Tamko2023,Wang2007}. Reaction-diffusion equations can also be used to study tumor growth \cite{Chaplain1995,Salman2003,Yanying2022} and fission wave dynamics \cite{Andrew2021}, among other topics.

From a theoretical standpoint, aspects of reaction-diffusion models, such as well-posedness and asymptotic behavior of the solutions, have also been thoroughly explored \cite{Mark1991,Mebarki2023,Morgan2023,Sleeman1987,SOMATHILAKE2005}, allowing for significant progress in our understanding of these systems. A hot topic in the study of PDEs and with great impact on both theoretical and applied domains is the development of techniques that allow exact solutions \cite{Polyanin2019, Polyanin2019_2, Polyanin2021} . In the case of reaction-diffusion equations, there is a wealth of literature describing various strategies; a few instances include the Hirota method \cite{Oktay2005,Gamze2006}, similarity transformations \cite{Enrique2018,Suazo13}, and the tanh-method  \cite{Malfliet1996,Rodrigo2000} among others \cite{Aibinu2023,Ali2023,Polyanin2019,Polyanin2022}, that have enabled the construction of explicit solutions. 

The aforementioned methods have been successfully applied to classical models such as the linear coupled  reaction-diffusion equation \cite{Ellery2013,Simpson2015}, the diffusive Lotka-Volterra system  \cite{Roman2022,Cherniha2004,Lu2022}, the Gray-Scott model \cite{GRAY1984,Navneet2022,RODRIGO2001}, and the coupled Burgers system \cite{Abazari2010,Mittal2011,Abdul2007} (Section \ref{Sect2} provides additional information on these systems). Essentially, the type of solutions most reported in the literature for these models are the so-called traveling waves, due to their fascinating mathematical and physical properties. However, there is still a lot of interest in finding solutions with more complicated structures (compared to solitary waves) that allow us to describe phenomena that the former cannot explain. The main goal of this research  is to introduce some generalizations of these models as well as a mechanism for finding explicit solutions.

To be more specific, we primarily study the explicit solutions for the general coupled equations of the type
\begin{eqnarray}
    \psi_t &=& H(\psi) + F_1(\psi,\varphi), \label{Gen1} \\
  \varphi_t &=& H(\varphi) + F_2(\psi,\varphi), \label{Gen2}
\end{eqnarray}
where the variable operator $H$ has a quadratic form 
$$ H(\psi) := a(t)\psi_{xx} -b(t)x^2 \psi + c(t)x\psi_x + d(t)\psi + xf(t)\psi -g(t)\psi_x$$
and $F_i(\psi,\varphi)$ ($i = 1,2$) will be given. Here, the coefficients $a(t),$ $b(t),$ $c(t),$ $ d(t),$ $ f(t),$ $g(t)$ are time-dependent real functions. By selecting appropriate $F_i$, the system (\ref{Gen1})-(\ref{Gen2}) can be viewed as generalizations of the classical reaction-diffusion models and the Burgers system discussed above. The integrability of equations (\ref{Gen1})-(\ref{Gen2}) is accomplished by requiring the coefficients to fulfill a Riccati system and extending, some classical model solutions via similarity transformations. Although the solutions discovered are primarily derived from traveling waves, we emphasize that our findings are considerably more general and applicable to any sort of solution. However, the employment of traveling waves did not constitute a limitation since it was clear from these examples that the system (\ref{Gen1})-(\ref{Gen2}) allowed solutions of the same kind, as well as others with more fascinating dynamics. 

The Riccati system (\ref{Rica_1})-(\ref{Carac__1}) has been applied to the analysis of several PDEs with variable coefficients, including the nonlinear Schr{\"o}dinger equation \cite{Acosta-Humanez,Suazo16,CorderoSoto2008,Escorcia,Suazo,Suazo18}, diffusion-type equations \cite{Suazo13}, and reaction-diffusion equations \cite{Enrique2018}. In our recent paper \cite{Jose2023}, we use the Riccati system to investigate the existence (by explicit construction) of Rogue wave and dark-bright soliton-like solutions for a coupled system of Schr{\"o}dinger equations (generalizing the Manakov system). And now in the present work, we extend the ideas of \cite{Enrique2018} by investigating a coupled reaction-diffusion system and coupled Burgers equations with the structure indicated in (\ref{Gen1})-(\ref{Gen2}).

After the system (\ref{Rica_1})-(\ref{Carac__1}) was introduced in \cite{CorderoSoto2008} it has brought attention of different researchers. In  \cite{CARINENA20142303} it was considered a new example of a Lie system that is not a Hamiltonian system. It is also one of the few Lie systems connected to PDEs until now \cite{CARINENA20142303}. System (\ref{Rica_1})-(\ref{Carac__1}) describes integral curves of the t-dependent vector field of the form
$$
X_t=a(t)X_1+b(t)X_2+c(t)X_3-2d(t)X_4+f(t)X_5+g(t)X_6
$$
where $X_1,...,X_6$  satisfy certain commutations rules showing that (\ref{Rica_1})-(\ref{Carac__1}) is a Lie system associated to a Vessiot Guldberg Lie algebra of Hamiltonian vector fields with respect to a presymplectic form, see \cite{CARINENA20142303}. Systems such as (\ref{Rica_1})-(\ref{Carac__1}) originated the introduction of a particular class of Lie systems on Dirac manifolds, denominated Dirac–Lie systems, see \cite{CARINENA20142303}. It is part of the objectives of this work show more applications of this Lie system in the construction of transformations for nonlinear reaction–diffusion equations.

 The consideration of parameters in this work is inspired by the work of Marhic, who, in 1978 \cite{Marhic1978} introduced (probably for the first time) a one-parameter $\{\alpha(0)\}$ family of solutions for the linear Schrödinger equation of the one-dimensional harmonic oscillator. The solutions presented by Marhic constituted a generalization of the original Schrödinger wave packet with oscillating width. 

We have also introduced solutions for a modified Riccati system (\ref{Ricati1})-(\ref{Ricati7}), see the appendix. We have prepared a Mathematica file as supplementary material verifying the Riccati systems used in this work.

This work makes the following contributions: 
\begin{itemize}
    \item{The introduction of \emph{new coupled reaction-diffusion and  Burgers-type equations with variable coefficients} and, by the use of similarity transformations and  Riccati system, we provide a mechanism to find explicit solutions. These new models generalize some classical systems such as the linear coupled reaction-diffusion equation, the diffusive Lotka-Volterra system, the Gray-Scott model, and the coupled Burgers system. }
    \item{The report of \emph{new solutions to coupled reaction-diffusion and Burgers-type equations}, include traveling wave solutions as well as solutions with more advanced kinetics determined by bending properties.} 
    \item{The introduction of a \emph{modified Riccati system} that allows us to generate exponential-type solutions for the general linear reaction-diffusion system. The solution for this Riccati system is provided. All solutions of the Riccati systems utilized in the current research were verified using the computer algebra system Mathematica.} 
\end{itemize}

The rest of the paper is organized as follows: In Section \ref{Sect2}, we present a brief overview of classical models, such the linear coupled reaction-diffusion equation, the diffusive Lotka-Volterra system, the Gray-Scott model, and the coupled Burgers system. Some of the solutions found in the literature are also presented in this section. Section \ref{Sect3} is the core  of the current research. Generalizations of the systems discussed in the preceding section are presented here, as well as sufficient conditions for the construction of exact solutions in terms of the solutions of the constant coefficient models (see Theorems \ref{Th1}-\ref{Th4} and Proposition \ref{PROPOSITION1}). In particular, Theorem \ref{Th1_1} employs the modified Riccati system. As a result of the coefficient balance requirements (Riccati system), traveling wave type solutions and solutions with noticeable bending dynamics are shown. A Mathematica file has been prepared as supplementary
material verifying the Riccati systems used in the construction of the solutions in this section. Conclusions and some final remarks are given in Section \ref{Sect4}. Section \ref{Sect5} corresponds to the
appendix, in which the solutions of the Riccati systems are presented.

\section{Classical Coupled Reaction-Diffusion and Burgers Systems} \label{Sect2}
In this section, we briefly describe some well-known reaction-diffusion equations, such as: the linear reaction-diffusion model, the diffusive Lotka-Volterra, and the Gray-Scott model. Likewise, the coupled Burgers equations are presented. We mainly exhibit the traveling wave solutions of these equations. However, the findings of the paper are valid for all  solutions of these classical systems.  
\subsection{The linear reaction-diffusion system} Several processes during the embryonic state are associated with the migration and proliferation of cells within growing tissues. A canonical model for these processes is the linear reaction-diffusion system \cite{Ellery2013,Simpson2015}
\begin{equation}
u_{\tau} = a_1 u_{\xi \xi} -b_1 u + v,  \label{LRD1}
\end{equation}
\begin{equation}
v_{\tau} = a_1 v_{\xi \xi} -b_2 v,  \label{LRD2}
\end{equation}
where the coefficients $a_1, \ b_i$ $(i = 1,2)$  are nonnegative. Another application is the study of the motion and biodegradation of dissolved organic contaminants in a saturated porous medium \cite{Simpson2007}. The following  explicit solutions are reported in the literature \cite{Chou2007,RajniRohila2016}:
\begin{equation}
\label{sol1a}
  u(\xi,\tau) = \left[ e^{-(b_1 + a_1)\tau} + e^{-(b_2 + a_1)\tau} \right]\cos \xi,  
\end{equation}
\begin{equation}
\label{sol1b}
  v(\xi,\tau) = (b_1 - b_2)e^{-(b_2 + a_1)\tau} \cos \xi.
\end{equation}
A family of exact solutions for (\ref{LRD1})-(\ref{LRD2}) were constructed in the framework of a growing domain (time-dependence domain) \cite{Simpson2015}. But, for the papers purpose,  we will only consider the solutions described above. 

\subsection{The diffusive Lotka-Volterra system}

The diffusive Lotka-Volterra (DLV) system is a generalization of the classical Lotka-Volterra system \cite{Lotka1920,Volterra1926}, in which the space diffusion is taking into account in order to have a more realistic population dynamics description. This system is used as model in different processes in biology \cite{Britton2003}, ecology \cite{Okubo2001}, and medicine \cite{Kuang2016}. 

The two-component DLV system 
\begin{equation}
   u_{\tau} = u_{\xi \xi} + u(a_1 -b_1 u - c_1 v),  \label{RD1}
\end{equation}
\begin{equation}
 v_{\tau} = v_{\xi \xi} + v(a_2 -b_2 u - c_2 v) \label{RD2}
\end{equation}
with $a_i, b_i, c_i\geq 0$ $(i = 1,2)$ describes competition between two populations $u$ and $v$, in which  diffusivity is the same for both populations. This DLV system admits the traveling wave solutions  \cite{Roman2022,Cherniha2004,Lu2022}:

\begin{equation}
\label{sol2a}
u(\xi, \tau) = \frac{A}{4B}\left[1 - \tanh \left(\frac{\sqrt{A}}{24}\xi - \frac{5A}{12}\tau \right)\right]^2,
\end{equation}
\begin{equation}
\label{sol2b}
 v(\xi, \tau) = \nu_0 + \nu_{1} \frac{A}{4B}\left[1 - \tanh \left(\frac{\sqrt{A}}{24}\xi - \frac{5A}{12}\tau \right)\right]^2.
\end{equation}
Here, the parameters $A, \ B, \ \nu_0, \ \nu_1$, are defined by the relations:
$$
A = \left\{ \begin{array}{lcc} a_1 = a_2, & if & \nu_0 = 0, \\ \\ a_1-a_2\frac{c_1}{c_2}, & if & \nu_0 = \frac{a_2}{c_2}  \end{array}\right., \quad  B = \left\{ \begin{array}{lcc} \frac{c_1 b_2 - b_1 c_2}{c_1 - c_2}, & if & \nu_0 = 0, \\ \\ b_1 + c_1 \nu_1, & if & \nu_0 = \frac{a_2}{c_2}  \end{array}\right.,
$$

$$
\nu_1 = \left\{ \begin{array}{lcc} \frac{b_1-b_2}{c_2-c_1}, & if & c_{1} \not = c_2, \ b_1 \not = b_2, \\ \\ -\frac{a_2 b_1}{a_1 c_1}, & if & c_1 = c_2, \ b_1 = b_2.  \end{array}\right. 
$$
It is well known that  these solutions possess the following properties  \cite{Roman2022,Cherniha2004,Lu2022}: If $\nu_0 \not = 0$, the solutions have the asymptotical behavior 
\begin{equation}
    (u,v) \rightarrow \left(\frac{a_1}{b_1},0\right),  \quad \quad \mbox{as} \quad  \quad \tau \rightarrow \infty, \label{Assypt_1}
\end{equation}
provided the condition $\hat{A} > \max \{ \hat{B}, \hat{C}\}$ where $\hat{A} = \frac{a_1}{a_2},$ $\hat{B} = \frac{b_1}{b_2},$ and $\hat{C} = \frac{c_1}{c_2}$ is satisfied. In contrast, if $\nu_0 = 0$ (with $a_1 = a_2$), then the solutions converge to the stationary state
\begin{equation}
    (u,v) \rightarrow \left(\frac{a_1(\hat{C}-1)}{b_2(\hat{C}-\hat{B})},\frac{a_1(1-\hat{B})}{c_2(\hat{C}-\hat{B})}\right), \quad  \quad \mbox{as} \quad \quad  \tau \rightarrow \infty. \label{Assypt_2}
\end{equation}
The asymptotic behavior is valid if one of these relations is satisfied: $\hat{B}> \hat{A} = 1 > \hat{C}$ or $\hat{C}> \hat{A} = 1 > \hat{B}$. In a biological context, the first property can describe the competition of two species, in which one of them dominates and the other eventually dies out. Behavior (\ref{Assypt_2}) represents an arbitrary long  coexistence of the species.  

In order to describe a larger number of population interaction  such as a predator-predator-prey model and a predator-prey-competition model, the three-component  DLV system is commonly assumed in the literature. This DLV system reads \cite{Roman2022,Pao2004,Pao2008}:
\begin{equation}
   u_{\tau} = u_{\xi \xi} + u(a_1 - b_1 u - c_1 v - e_1w),  \label{RD3}
\end{equation}
\begin{equation}
   v_{\tau} = v_{\xi \xi} + v(a_2 - b_2 u - c_2 v - e_2w),  \label{RD4}
\end{equation}
\begin{equation}
   w_{\tau} = w_{\xi \xi} + w(a_3 - b_3 u - c_3 v - e_3 w),\label{RD5}
\end{equation}
where $a_i, b_i, c_i, e_i,$ $(i = 1,2,3)$ are constants. As in the previous case, system (\ref{RD3})-(\ref{RD5}) possesses solutions in a  traveling wave form \cite{Roman2022,Li_Chang2011}:
\begin{equation}
\label{sol4a}
u(\xi, \tau) = \left(2 + \theta - \frac{a_1}{4}\right)\left[ 1- \tanh\left( \xi - \theta \tau \right)  \right]^2,
\end{equation}
\begin{equation}
\label{sol4b}
v(\xi, \tau) = \frac{a_1}{4}\left[ 1 + \tanh\left( \xi - \theta \tau \right)  \right]^2,
\end{equation}
\begin{equation}
\label{sol4c}
w(\xi, \tau) = \left(a_1 - \theta - 2\right)\left[ 1- \tanh\left( \xi - \theta \tau \right)  \right],
\end{equation}
as long as   $a_1 = a_2 = a_3,$ $b_1 = 1, \ b_2 = -\frac{a_1-24}{8-a_1 + 4 \theta}, \ b_3 = -\frac{a_1-4-2 \theta}{8-a_1+4 \theta}, $ $c_1 = -\frac{4\theta-a_1-16}{a_1}, \ c_2 = 1, \ c_3 = -\frac{2\theta-a_1-4}{a_1}, \ e_1 = -\frac{a_1-4-2\theta}{2+\theta-a_1}, \ e_2 = -\frac{a_1-4+2\theta}{2+\theta-a_1}, \ e_3 = 1, $ with $\theta + 2 < a_1 < 4(\theta +2)$. The last inequality is used to guarantee the positivity of the solutions.
\subsection{The Gray-Scott model}
The following is one the most important reaction-diffusion systems due to the diversity of phenomena that can be described by the model. More specifically, we consider the system of equations 
\begin{equation}
u_{\tau} = u_{\xi \xi} - uv^2 + b_1(1-u),  \label{NRD1}
\end{equation}
\begin{equation}
v_{\tau} = v_{\xi \xi} + uv^2 -b_1v,  \label{NRD2}
\end{equation}
where $b_1$ is the diffusion constant \cite{GRAY1984,Navneet2022,RODRIGO2001}. Gray and Scott \cite{GRAY1984} introduced this system in 1984 as a mathematical model for the cubic autocatalytic reaction. One of the exact solutions reported in the literature has the form \cite{Navneet2022,RODRIGO2001}  
\begin{equation}
\label{sol6a}
   u(\xi,\tau) = \frac{3-\sqrt{1-4b_1}}{4} - \frac{\sqrt{2 + 2\sqrt{1-4b_1}-4b_1}}{4}\tanh \left( \frac{\sqrt{1 + \sqrt{1-4b_1}-2b_1}}{4}(\xi-\theta \tau) \right),
\end{equation}
\begin{equation}
\label{sol6b}
   v(\xi,\tau) = \frac{1+\sqrt{1-4b_1}}{4} + \frac{\sqrt{2 + 2\sqrt{1-4b_1}-4b_1}}{4}\tanh \left( \frac{\sqrt{1 + \sqrt{1-4b_1}-2b_1}}{4}(\xi-\theta \tau) \right),
\end{equation}
with $\theta = \frac{\sqrt{2}(1-3\sqrt{1-4b_1})}{4}.$  Travelling waves solutions allow us to describe a large number of chemically
reacting systems \cite{Field1985,Kuramoto1984}. Recently in \cite{Wenjing2024}, numerical simulations of  pattern formation for more complex solutions revealed the potential applicability in 3D printing.   
\subsection{The coupled Burgers system}
The coupled Burgers model is a system of partial differential equations as follows:
\begin{equation}
  u_{\tau} = u_{\xi \xi}-b_{1}uu_{\xi}-c_{1}(uv)_{\xi}, \label{Bur1}  
\end{equation}
\begin{equation}
  v_{\tau} = v_{\xi \xi}-b_{2}vv_{\xi}-c_{2}(uv)_{\xi},   \label{Bur2}
\end{equation}
with $b_{i}, c_{i}$ $(i = 1,2)$ as  constants. The  system (\ref{Bur1})-(\ref{Bur2})  has been studied extensively from  theoretical and numerical point of view \cite{Abazari2010,Mehdi2007,Manoj2014,Mittal2011,Abdul2007}. As in the previous models, the coupled Burgers equations admit a travelling wave solution in the form \cite{Abazari2010,Manoj2014}:
\begin{equation}
\label{sol7a}
u(\xi,\tau) = B -2A(2c_1 - 1/4c_{1}c_{2} -1)\tanh \left[A\left(20\xi -10 -2A\tau \right)\right],
\end{equation}
\begin{equation}
\label{sol7b}
v(\xi,\tau) = B (2c_1 - 1/4c_{1}c_{2} -1)\tanh \left[A\left(20\xi -10 -2A\tau \right)\right],
\end{equation}
where $A = \left(4c_1c_2-1/4c_1 -2\right).$ Multiple-kink solutions are also reported for this system \cite{Jaradat2018}. In the field of fluid dynamics, the coupled Burgers equation plays a crucial role in the study of turbulence because it can be considered as a simple case of the Navier–Stokes equations.
\section{The Generalized Coupled Reaction-Diffusion and Burgers-type Systems}\label{Sect3}
This section demonstrates how to generate explicit solutions for general variable coefficients systems. Each of these general models can be viewed as extensions of the systems discussed in the preceding section. In our findings, the integrability of such systems is obtained on the premise of an equilibrium in the coefficients, which essentially assumes that they satisfy a Riccati system; as a result, the solutions of models with variable coefficients, which have more complicated dynamics, are obtained by transforming those solutions of models corresponding to systems of constant coefficients.
\subsection{Solutions for the generalized linear reaction-diffusion system}
Let's start with the first result of this section, which allows the construction of explicit solutions of the general linear reaction-diffusion model.  
\begin{theorem}[\textbf{Generalized Linear Reaction-Diffusion System}]\label{Th1}
The variable coefficients reaction-diffusion system 
\begin{eqnarray}
\psi _{t} &=&a(t)\psi_{xx} -(b\left(
t\right) x^{2}-d(t) - L_{1}(t)-xf(t))\psi -(g(t)-c\left( t\right) x)\psi
_{x}+ h(t)\varphi,  \label{GLRDS1} \\
\varphi _{t} &=&a(t)\varphi_{xx} -(b\left(
t\right) x^{2}-d(t) - L_{2}(t)-xf(t))\varphi -(g(t)-c\left( t\right) x)\varphi
_{x}  \label{GLRDS2}
\end{eqnarray}
can be transformed into the constant coefficients system 
\begin{equation}
  u_{\tau} = u_{\xi \xi}  -b_1 u + v,  \label{LRDS1}
\end{equation}
\begin{equation}
 v_{\tau} = v_{\xi \xi} -b_2 v.  \label{LRDS2}
\end{equation}
\end{theorem}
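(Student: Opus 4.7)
The strategy is a similarity reduction that simultaneously transforms both components by the \emph{same} Gaussian prefactor and the \emph{same} characteristic variable. Concretely, I would seek
$$
\psi(x,t)=\frac{1}{\sqrt{\mu(t)}}\,e^{\alpha(t)x^{2}+\beta(t)x+\gamma(t)}\,u\bigl(\xi(x,t),\tau(t)\bigr),\qquad
\varphi(x,t)=\frac{1}{\sqrt{\mu(t)}}\,e^{\alpha(t)x^{2}+\beta(t)x+\gamma(t)}\,v\bigl(\xi(x,t),\tau(t)\bigr),
$$
with $\xi(x,t)=\delta(t)x+\varepsilon(t)$ and $\tau=\tau(t)$ to be determined. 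Using the same prefactor in both components is essential: it is the feature that makes the coupling term $h(t)\varphi$ in (\ref{GLRDS1}) reduce cleanly to a multiple of $v$ after the common exponential is cancelled. This is legitimate precisely because the linear differential operator $H$ acting on $\psi$ is identical to the one acting on $\varphi$.

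The first step is to compute $\psi_t,\psi_x,\psi_{xx}$ (and likewise for $\varphi$) via the chain rule, substitute into (\ref{GLRDS1})--(\ref{GLRDS2}), divide through by the common exponential, and collect terms by powers of $x$ and by the derivatives $u_\tau,u_{\xi\xi},u_\xi,u$ (and the analogous ones for $v$). One then demands that the equation for $u$ take the form $u_\tau=u_{\xi\xi}-b_1u+v$, which forces:
\begin{itemize}
\item the coefficient of $u_\xi$ to vanish (eliminating advection in the target frame),
\item the coefficient of $u_{\xi\xi}$ to equal $\tau'(t)$,
\item the coefficients of $x^{2}u$, $xu$, and $u$ to vanish after the contributions from $b(t)$, $f(t)$, $d(t)$, $c(t)$, $g(t)$ are taken into account, and
\item the coupling coefficient $h(t)$ to satisfy $h(t)=\tau'(t)$ so that $h(t)\varphi$ becomes $\tau'(t)v$, as required by (\ref{LRDS1}).
\end{itemize}
Doing the same in the $\varphi$-equation yields identical conditions on $\alpha,\beta,\gamma,\mu,\delta,\varepsilon,\tau$, plus the matching requirement for the constant terms which fixes $L_1(t)$ and $L_2(t)$ in terms of $b_1\tau'(t)$ and $b_2\tau'(t)$ respectively (with appropriate sign).

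The second step is to recognize the resulting ODE system for $\alpha,\beta,\gamma,\mu,\delta,\varepsilon$ as precisely the Riccati system (\ref{Rica_1})--(\ref{Carac__1}) introduced in \cite{CorderoSoto2008} and used throughout this paper; the lead equation is the Riccati ODE
$$
\alpha'(t)+b(t)+2c(t)\alpha(t)+4a(t)\alpha^{2}(t)=0,
$$
to which the others are slaved linearly. Because this system is known to be solvable (via the characteristic function $\mu$), the transformation is well-defined on any interval where $\mu(t)\neq 0$.

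The main obstacle I expect is not the single-equation reduction, which mirrors \cite{Enrique2018,Suazo13}, but rather verifying that the \emph{same} $\alpha,\beta,\gamma,\delta,\varepsilon,\tau,\mu$ simultaneously trivialize the $x$-dependent coefficients in both (\ref{GLRDS1}) and (\ref{GLRDS2}), and that the auxiliary data $(L_1,L_2,h)$ can be chosen \emph{a posteriori} to absorb all leftover constant-in-$x$ terms while still respecting $h=\tau'$. This works out because the quadratic/linear $x$-coefficients on the right-hand sides of the two equations are identical, so both equations impose exactly the same Riccati constraints; the only genuinely new freedom is the three scalar functions $L_1,L_2,h$, which match the three scalar mismatches $-b_1\tau',\,-b_2\tau',\,\tau'$ respectively. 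Once this is in place, the Mathematica verification mentioned in the introduction certifies consistency of the Riccati closure.
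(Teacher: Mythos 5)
Your plan is correct and is essentially the paper's own proof: the same common Gaussian similarity ansatz for $\psi$ and $\varphi$, reduction of the coefficient-matching conditions to the Riccati system (\ref{Rica_1})--(\ref{Rica_6}) with the characteristic equation (\ref{Carac__1}), and the closure choices $h(t)=\gamma'(t)=a(t)\beta^{2}(t)$ and $L_{i}(t)=-b_{i}a(t)\beta^{2}(t)$ (your $h=\tau'$, $L_i\sim b_i\tau'$). No gaps to report.
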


\begin{proof}

We will consider solutions for (\ref{GLRDS1})-(\ref{GLRDS2}) of the form 
\begin{equation}
\psi (x,t)=\frac{1}{\sqrt{\mu (t)}}e^{\alpha (t)x^{2}+\delta (t)x+\kappa
(t)}u(\xi ,\tau ),\qquad \xi =\beta (t)x+\varepsilon (t),\qquad \tau
=\gamma (t),\label{subst__1}
\end{equation}
and 
\begin{equation}
\varphi (x,t)=\frac{1}{\sqrt{\mu (t)}}e^{\alpha (t)x^{2}+\delta
(t)x+\kappa (t)}v(\xi ,\tau ).\qquad  \label{subst__2}
\end{equation}
Let $S = \alpha (t)x^{2}+\delta
(t)x+\kappa (t). $  Then, the derivatives of the function $\psi$ (similarly, derivatives of $\phi$ are obtained by changing the function $u$ by $v$) are 
\begin{equation}
    \psi_x = \mu^{-1/2}e^{S}\left[ u(2\alpha x + \delta) +  \beta u_\xi\right], \label{D1}
\end{equation}
\begin{equation}
    \psi_{xx} = \mu^{-1/2}e^{S}\left[ u(2\alpha x + \delta)^2 +  2\beta u_\xi (2 \alpha x + \delta) + u(2\alpha x + \delta) + \beta^2 u_{\xi \xi}\right], \label{D2}
\end{equation}
\begin{equation}
\psi_t = \mu^{-1/2}e^{S}\left[ -\frac{1}{2}\mu^{-1}\mu^{\prime}u + u(\alpha^\prime x^2 + \delta^\prime x + \kappa^\prime) + \gamma^\prime u_\tau + u_\xi (\beta^\prime x + \varepsilon^\prime) \right]. \label{D3}
\end{equation}
\noindent After substituting (\ref{D1})-(\ref{D3}) into (\ref{GLRDS1})-(\ref{GLRDS2}), we obtain the Riccati system   \cite{CorderoSoto2008,Escorcia,Suazo18,Suazo,Suazo13}: 
\begin{equation}
\dfrac{d\alpha }{dt}+b(t)= 2c(t)\alpha +4a(t)\alpha^{2},  \label{Rica_1}
\end{equation}%
\begin{equation}
\dfrac{d\beta }{dt}= (c(t)+4a(t)\alpha(t))\beta,  \label{Rica_2}
\end{equation}%
\begin{equation}
\dfrac{d\gamma }{dt}= a(t)\beta^{2}(t),
\label{Rica_3}
\end{equation}%
\begin{equation}
\dfrac{d\delta }{dt} + 2\alpha (t)g(t) = (c(t)+4a(t)\alpha(t))\delta + f(t),
\label{Rica_4}
\end{equation}%
\begin{equation}
\dfrac{d\varepsilon }{dt}=(2a(t)\delta(t)-g(t))\beta (t),  \label{Rica_5}
\end{equation}%
\begin{equation}
\dfrac{d\kappa }{dt}=a(t)\delta^{2}(t)-g(t)\delta (t).  \label{Rica_6}
\end{equation}%
\ Considering the  substitution\ 
\begin{equation}
\alpha = -\dfrac{\mu ^{\prime }(t)}{4a(t)\mu (t)}-\dfrac{d(t)}{2a(t)%
},  \label{Sus_1}
\end{equation}%
it follows that the Riccati equation (\ref{Rica_1}) becomes (characteristic equation)
\begin{equation}
\mu ^{\prime \prime }-\eta (t)\mu ^{\prime }-4\sigma (t)\mu =0,
\label{Carac__1}
\end{equation}%
with\ 
\begin{equation}
\eta (t)=\frac{a^{\prime }}{a}+2c-4d,\hspace{1cm}\sigma (t)=ab+cd-d^{2}+%
\frac{d}{2}\left( \frac{a^{\prime }}{a}-\frac{d^{\prime }}{d}\right) .
\end{equation}%
\noindent Further, if we choose the conditions 
$$ h(t)= a(t)\beta^{2}(t), \quad \quad L_{i}(t) = -b_{i}a(t)\beta^2(t), \quad i = 1,2, $$
the functions $u(\xi,\tau)$ and $v(\xi,\tau)$ will satisfy the system (\ref{LRDS1})-(\ref{LRDS2}). The Riccati system (\ref{Rica_1})-(\ref{Rica_6}) is solved explicitly in terms of the fundamental solution of the characteristic equation (\ref{Carac__1}), as can be found in the Appendix of this work (see equations (\ref{mu})-(\ref{kappa0})).
\end{proof}
According to Theorem \ref{Th1}, the interaction coefficient $h(t)$ determines how the dynamics of the solution $\psi$ rely on the function $\varphi$. Clearly, each coefficient in the equation has a substantial impact on the dynamics of the solutions, but we will only focus on the diffusion and interaction coefficients in this work. The examples below demonstrate how the interaction coefficient might influence the dynamics of solutions when it exhibits two distinct types of decay.
\subsubsection{\textbf{Constant diffusivity and an interaction coefficient with exponential decay}} 
Consider the general reaction-diffusion system
\begin{eqnarray}
    \psi _{t} &=& \psi_{xx} -\psi  - b_{1}e^{-2t}\psi -x\psi_x +e^{-2t}\varphi,  \label{Ex1a} \\
    \varphi _{t} &=& \varphi_{xx} -\varphi - b_{2}e^{-2t}\varphi -x\varphi_x.  
\label{Ex1b}
\end{eqnarray}
 By solving the Riccati system (\ref{Rica_1})-(\ref{Rica_6}) with the initial conditions $\alpha(0) = \delta(0) = \varepsilon(0) = \kappa(0) = \gamma(0) = 0, \ \beta(0) = 1, \ \mu(0) = 1, $ we get
\begin{eqnarray*}
 \alpha(t)= \delta(t) = \varepsilon(t) = \kappa(t) = 0, \quad \beta(t) = e^{-t}, \quad  \gamma(t)=  e^{-t}\sinh t, \quad \mu(t) = e^{2t}.
\end{eqnarray*}
 Then, in line with Theorem \ref{Th1}, the system (\ref{Ex1a})-(\ref{Ex1b}) admits a solution given by the equations (\ref{subst__1})-(\ref{subst__2}) with the functions $u,v$ satisfying the relations (\ref{sol1a})-(\ref{sol1b}) (with $a_1 = 1$), i.e.,
\begin{equation}
  \psi(x,t) =e^{-t} \cos\left(e^{-t}x\right)\left[e^{-(b_1 +1)e^{-t}\sinh t } + e^{-(b_2 +1)e^{-t}\sinh t } \right], 
\end{equation}
\begin{equation}
  \varphi(x,t) = (b_1 - b_2)e^{-t}\cos\left(e^{-t}x\right)e^{-(b_2 +1)e^{-t}\sinh t }.
\end{equation}
\begin{figure}[h!]
\centering
\subfigure[Profile of the solution $\psi$.]{\includegraphics[scale=0.32]{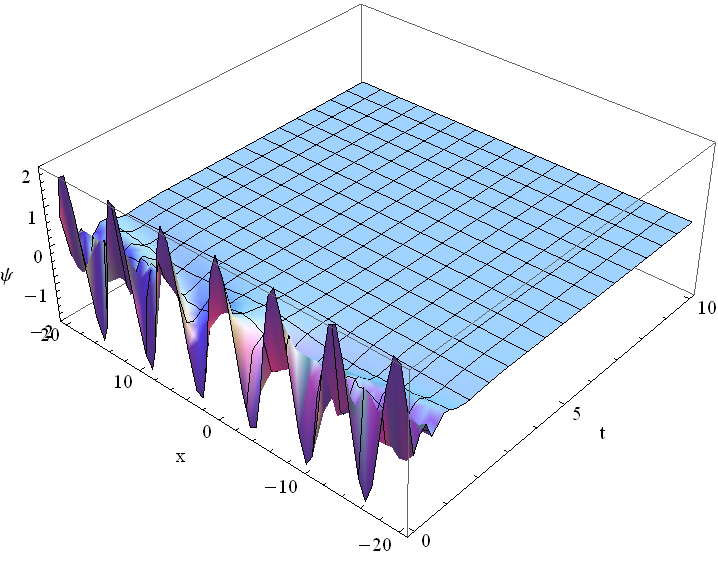}}
\subfigure[Contour of the solution $\psi$.]{\includegraphics[scale=0.28]{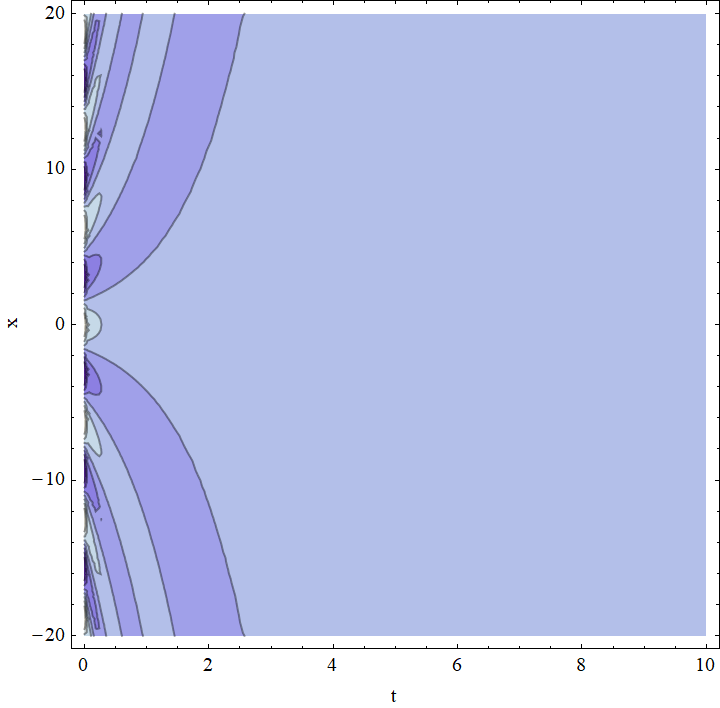}}
\subfigure[Profile of the solution $\varphi$.]{\includegraphics[scale=0.31]{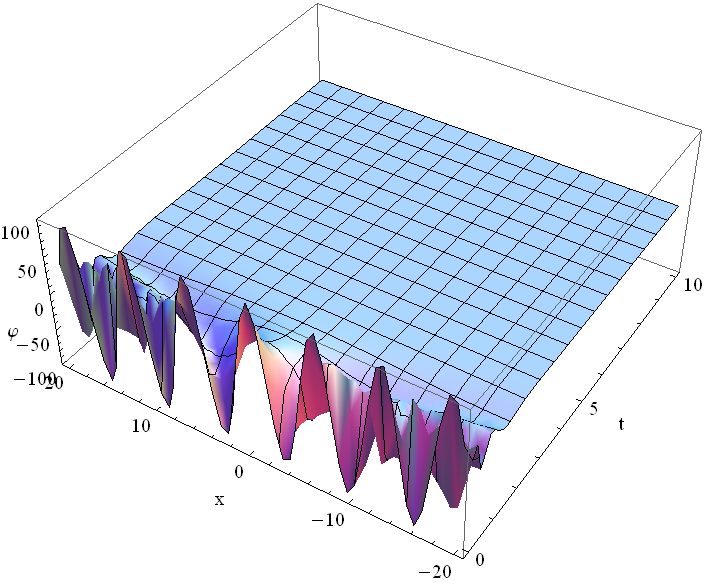}}
\subfigure[Contour of the solution $\varphi$.]{\includegraphics[scale=0.35]{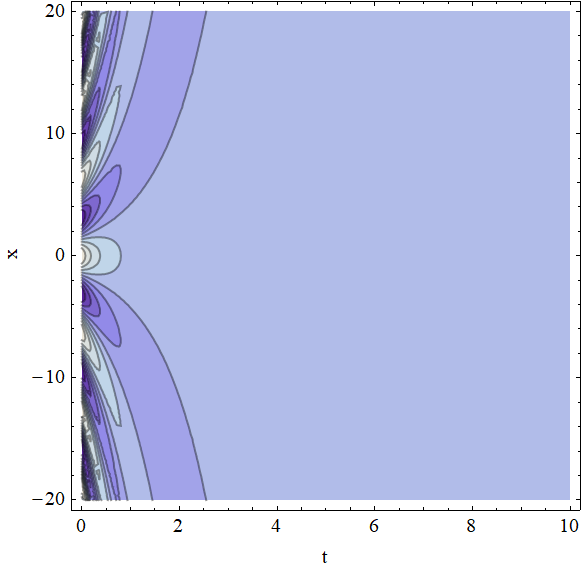}}
\caption{Solutions for the system  (\ref{Ex1a})-(\ref{Ex1b}) for the parameters $a_1 = 1,$ $b_1 = 100,$ and $b_2 = 1$. Here, (a) and (c) describe the profiles of the functions $\psi$ and $\varphi$ respectively. The corresponding contours of $\psi$ and $\varphi$ are shown in (b) and (d). }\label{Fig1}
\end{figure}
The time-evolution of these solutions for  $b_1 = 100$ and $b_2 = 1$, are shown in Figure \ref{Fig1}. Therefore we can observe the expected  exponential decay of the solutions. 
\subsubsection{\textbf{Periodic diffusivity and an interaction coefficient with rational decay}}
In this case, we take into account the following  system of equations:
{\footnotesize
\begin{eqnarray}
    \psi _{t} &=& e^{-2\cos t}\psi_{xx} + (e^{2\cos t}x^2 + 1)\psi  - \frac{b_1 e^2 \psi}{\left[e^2 + (16-2e^2)t\right]^2} + (2-\sin t)x\psi_x + \frac{e^2 \varphi}{\left[e^2 + (16-2e^2)t\right]^2},  \label{Ex2a} \\
    \varphi _{t} &=& e^{-2\cos t}\varphi_{xx} + (e^{2\cos t}x^2 + 1)\varphi  - \frac{b_2 e^2 \varphi}{\left[e^2 + (16-2e^2)t\right]^2} + (2-\sin t)x\varphi_x. 
\label{Ex2b}
\end{eqnarray}}
 The solution of the associated Riccati system is
\begin{eqnarray*}
 \alpha(t)= \frac{e^{2\cos t}\left[4 + (8-e^2)t\right]}{t(-16+2e^2)-e^2}, \quad \beta(t) &=&  \frac{e^{1+\cos t}}{t(16-2e^2)+e^2}, \quad  \gamma(t)= \frac{t}{t(16-2e^2)+e^2},
\end{eqnarray*}
\begin{figure}[h!]
\centering
\subfigure[Profile of the solution $\psi$.]{\includegraphics[scale=0.32]{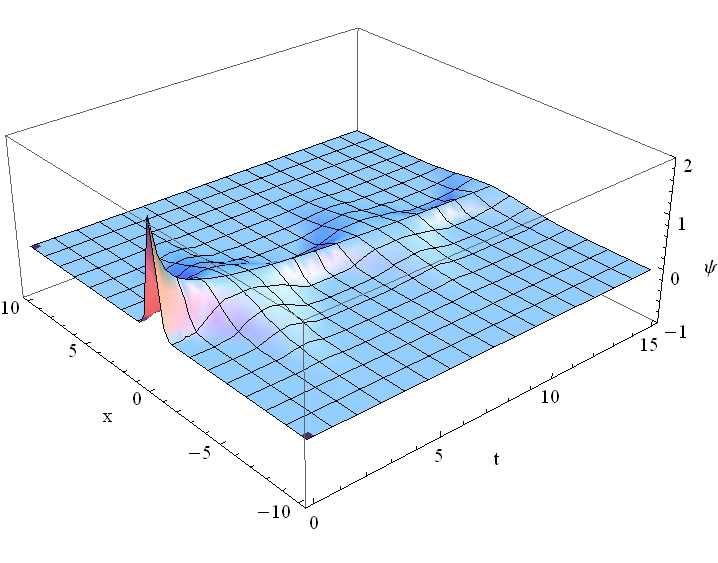}}
\subfigure[Contour of the solution $\psi$.]{\includegraphics[scale=0.25]{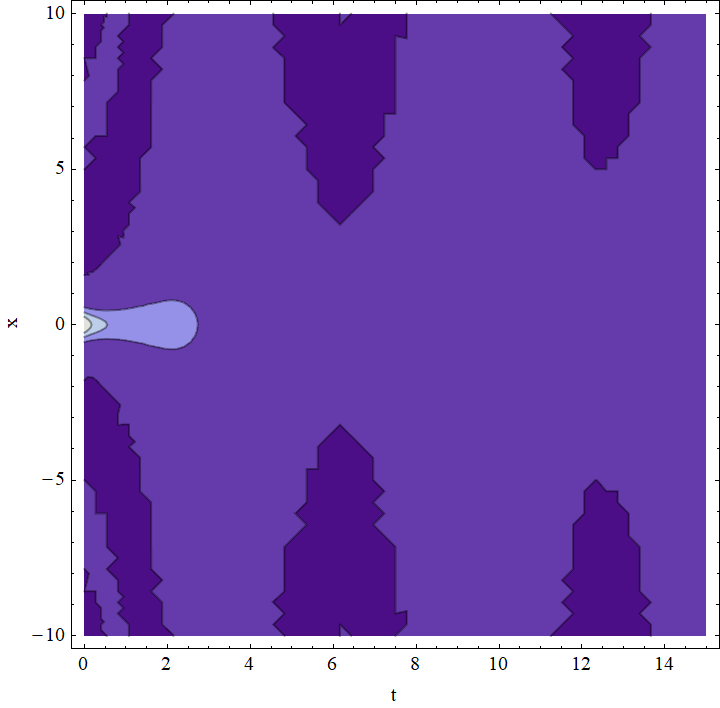}}
\subfigure[Profile of the solution $\varphi$.]{\includegraphics[scale=0.32]{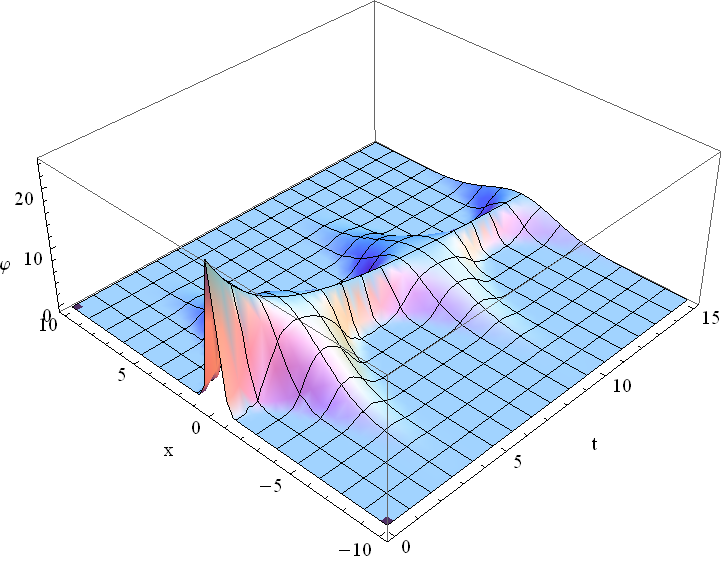}}
\subfigure[Contour of the solution $\varphi$.]{\includegraphics[scale=0.31]{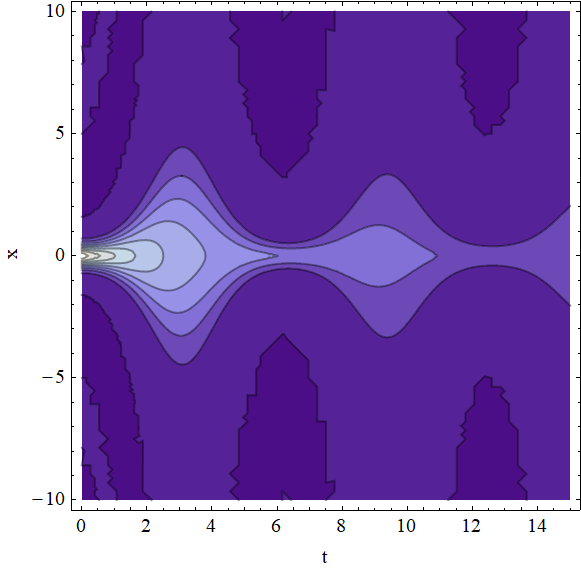}}
\caption{Solutions for the system  (\ref{Ex2a})-(\ref{Ex2b}) for the parameters $a_1 = 1,$ $b_1 = 25,$ and $b_2 = 1$. Here, (a) and (c) describe the profiles of the functions $\psi$ and $\varphi$ respectively. The corresponding contours of $\psi$ and $\varphi$ are shown in (b) and (d).}\label{Fig2}
\end{figure}

and
\begin{equation*}
\delta(t) \ = \ \varepsilon(t) = \kappa(t) = 0, \quad  \quad \mu(t) = 1 + t(16e^{-2}-2).
\end{equation*}
Here, we have assumed  the initial conditions $\alpha(0) = -4, \ \beta(0) = 1,\ \mu(0) = 1,\    \delta(0) = \varepsilon(0) =  \kappa(0) = \gamma(0) = 0.$ 
 Therefore, the system (\ref{Ex2a})-(\ref{Ex2b}) admits the explicit solutions 
{\footnotesize
\begin{equation}
  \psi(x,t) = \frac{\cos\left( \frac{e^{1+\cos t}}{t(16-2e^2)+e^2}x\right)}{\sqrt{1 + t(16e^{-2}-2)   }}\exp\left(\frac{e^{2\cos t}\left[4 + (8-e^2)t\right]}{t(-16+2e^2)-e^2}x^2\right)\left[e^{-(b_1 +1)t/t(16-2e^2)+e^2 } + e^{-(b_2 +1)t/t(16-2e^2)+e^2 } \right],  
\end{equation}}
\begin{equation}
  \varphi(x,t) = \frac{(b_1 - b_2)\cos\left( \frac{e^{1+\cos t}}{t(16-2e^2)+e^2}x\right)}{\sqrt{1 + t(16e^{-2}-2)   }}e^{-(b_2 +1)t/t(16-2e^2)+e^2 }.
\end{equation} 
The solutions  $\psi, \ \varphi$ are shown in Figure \ref{Fig2}.

The next result allows the construction of exponential-type solutions for the generalized linear reaction-diffusion equation. As we will see, the construction of these solutions requires the introduction of a \emph{modified Riccati system} which differs from (\ref{Rica_1})-(\ref{Rica_6}) because of the equations for the new functions $\kappa_1$ and $\kappa_2$.   
\begin{theorem}[\textbf{Exponential-type Solutions for the Linear Reaction-Diffusion System}]\label{Th1_1}
The variable coefficient coupled reaction-diffusion system 
\begin{eqnarray}
\psi _{t} &=&a(t)\psi_{xx} -(b\left(
t\right) x^{2}-d(t) -xf(t))\psi -(g(t)-c\left( t\right) x)\psi
_{x}+ h(t)\varphi,  \label{LinearRD1} \\
\varphi _{t} &=&a(t)\varphi_{xx} -(b\left(
t\right) x^{2}-d(t) -xf(t))\varphi -(g(t)-c\left( t\right) x)\varphi
_{x} \label{LinearRD2}
\end{eqnarray}
admits the explicit solutions
\begin{equation}
\psi (x,t)=\frac{1}{\sqrt{\mu (t)}}e^{\alpha (t)x^{2}+ \beta(t)xy + \gamma(t)y^2 +\delta
(t)x+\varepsilon(t)y +\kappa_{1}(t)}\label{substitutioN_1}
\end{equation}
and 
\begin{equation}
\varphi (x,t)=\frac{1}{\sqrt{\mu (t)}}e^{\alpha (t)x^{2}+ \beta(t)xy + \gamma(t)y^2 +\delta
(t)x+\varepsilon(t)y +\kappa_{1}(t)+\kappa_{2}(t)}, \label{substitutioN_2}
\end{equation}
with $y$ as a real parameter and the functions $\alpha(t), \  \beta(t), \ \gamma(t), \ \delta(t), \ \varepsilon(t), \ \kappa_1(t), \ \kappa_{2}(t), \ \mu(t),$ satisfying the modified Riccati system (\ref{Ricati1})-(\ref{Ricati7}).
\end{theorem}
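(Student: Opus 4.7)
The plan is to mirror the proof of \thmref{Th1}, but this time substitute a pure exponential ansatz (no separated factor $u$ or $v$) with $y$ treated as a free real parameter. Writing the exponent as $S_1(x,t) = \alpha(t) x^2 + \beta(t) xy + \gamma(t) y^2 + \delta(t) x + \varepsilon(t) y + \kappa_1(t)$, the first step is to compute $\psi_t$, $\psi_x$, $\psi_{xx}$ directly from (\ref{substitutioN_1}), noting the identities
\begin{align*}
\psi_x/\psi &= 2\alpha(t) x + \beta(t) y + \delta(t),\\
\psi_{xx}/\psi &= \bigl(2\alpha(t) x + \beta(t) y + \delta(t)\bigr)^2 + 2\alpha(t).
\end{align*}
The analogous formulas hold for $\varphi$, and since the two exponentials differ only by the factor $e^{\kappa_2(t)}$, one has the simple relation $\varphi/\psi = e^{\kappa_2(t)}$, which is exactly what will absorb the coupling term.

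Next, substitute into (\ref{LinearRD1}) and divide by $\mu^{-1/2} e^{S_1}$; the result is an identity polynomial in $x$ and $y$ in which the coupling term contributes $h(t) e^{\kappa_2(t)}$, independent of both variables. Since $y$ is a free parameter, each monomial $x^i y^j$ with $i+j \leq 2$ must vanish separately, producing six ODEs. The coefficients of $x^2, xy, y^2, x, y$ reproduce ODEs that are structurally identical to (\ref{Rica_1})--(\ref{Rica_5}), now with $\beta, \gamma, \varepsilon$ reinterpreted as the coefficients of $xy$, $y^2$, and $y$. The constant-term balance, however, picks up the extra forcing $h(t) e^{\kappa_2(t)}$, yielding the modified equation $\kappa_1'(t) = a(t)\delta^2(t) - g(t)\delta(t) + h(t) e^{\kappa_2(t)}$, i.e., (\ref{Ricati6}). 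I would then repeat the substitution with $\varphi$ in (\ref{LinearRD2}), which has no coupling term; the same five ODEs for $\alpha, \beta, \gamma, \delta, \varepsilon, \mu$ are recovered, while the constant term now gives $(\kappa_1 + \kappa_2)'(t) = a(t)\delta^2(t) - g(t)\delta(t)$. Subtracting yields the autonomous separable equation $\kappa_2'(t) = -h(t) e^{\kappa_2(t)}$, i.e., (\ref{Ricati7}), which admits the explicit primitive $e^{-\kappa_2(t)} = e^{-\kappa_2(0)} + \int_0^t h(s)\,ds$. Finally, the substitution (\ref{Sus_1}) linearizes the Riccati equation for $\alpha$ into the characteristic equation (\ref{Carac__1}), closing the system.

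The main obstacle is the constant-term balance. One must verify that the $-\mu'/(2\mu)$ term arising from differentiating the prefactor $\mu^{-1/2}$, together with the substitution (\ref{Sus_1}), exactly cancels the $2a(t)\alpha(t)$ contribution from $\psi_{xx}$ and the potential term $d(t)$, leaving as the only residue the novel forcing $h(t) e^{\kappa_2(t)}$. Once this cancellation is recognized, the decoupling of $\kappa_2$ from the $(\alpha, \beta, \gamma, \delta, \varepsilon, \mu)$ subsystem is automatic, and this is precisely the structural feature that distinguishes the modified Riccati system (\ref{Ricati1})--(\ref{Ricati7}) from (\ref{Rica_1})--(\ref{Rica_6}). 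The remaining coefficient matching is routine given the algebraic expansions of $(2\alpha x + \beta y + \delta)^2$ and $(c(t)x - g(t))(2\alpha x + \beta y + \delta)$.
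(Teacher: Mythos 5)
Your proposal is correct and follows essentially the same route as the paper: substitute the exponential ansatz, match coefficients of $x^2, xy, y^2, x, y$ and the constant term in each equation, obtain the modified Riccati system (\ref{Ricati1})--(\ref{Ricati7}) (with the coupling absorbed as $h(t)e^{\kappa_2(t)}$ since $\varphi = e^{\kappa_2}\psi$), and linearize (\ref{Ricati1}) via (\ref{Ricati8}) into the characteristic equation (\ref{Carac__1}). Your explicit subtraction of the two constant-term balances to isolate $\kappa_2' = -h e^{\kappa_2}$, and the primitive $e^{-\kappa_2(t)} = e^{-\kappa_2(0)} + \int_0^t h(s)\,ds$, match what the paper states and relegates to the appendix.
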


\begin{proof}
Let $S_i = \alpha (t)x^{2}+ \beta(t)xy + \gamma(t)y^2 +\delta
(t)x+\varepsilon(t)y +\kappa_{1}(t) + (i-1)\kappa_{2}(t)$ for $i = 1,2$.  In these terms, the derivatives of the function $\psi$ are as follows (similarly, derivatives of $\phi$ are obtained by changing $S_1$ for $S_2$ and adding the term  $\kappa_2^\prime$, in the temporal derivative):
\begin{equation}
    \psi_x = \mu^{-1/2}e^{S_1}\left[2\alpha x + \beta y + \delta \right], \label{DD1}
\end{equation}
\begin{equation}
      \psi_{xx} = \mu^{-1/2}e^{S_1}\left[ (2\alpha x + \beta y + \delta)^2 + 2\alpha\right], \label{DD2}
\end{equation}
\begin{equation}
 \psi_t = \mu^{-1/2}e^{S_1}\left[ -\frac{1}{2}\mu^{-1}\mu^{\prime} + \alpha^\prime x^2 + \beta^\prime xy + \gamma^\prime y^2 + \delta^\prime x + \varepsilon^\prime y +  \kappa_1 ^\prime \right]. \label{DD3}
\end{equation}
Now, substituting (\ref{DD1})-(\ref{DD3}) into the equations  (\ref{LinearRD1})-(\ref{LinearRD2}) we get the \emph{modified Riccati system}:
\begin{equation}
\dfrac{d\alpha }{dt}+b(t)= 2c(t)\alpha +4a(t)\alpha^{2},  \label{Ricati1}
\end{equation}%
\begin{equation}
\dfrac{d\beta }{dt}= (c(t)+4a(t)\alpha(t))\beta,  \label{Ricati2}
\end{equation}%
\begin{equation}
\dfrac{d\gamma }{dt}= a(t)\beta^{2}(t),
\label{Ricati3}
\end{equation}%
\begin{equation}
\dfrac{d\delta }{dt} + 2\alpha (t)g(t) = (c(t)+4a(t)\alpha(t))\delta + f(t),
\label{Ricati4}
\end{equation}%
\begin{equation}
\dfrac{d\varepsilon }{dt}=(2a(t)\delta(t)-g(t))\beta (t),  \label{Ricati5}
\end{equation}%
\begin{equation}
\dfrac{d\kappa_1 }{dt}=a(t)\delta^{2}(t)-g(t)\delta (t) + h(t)e^{\kappa_2(t)},  \label{Ricati6}
\end{equation}%
\begin{equation}
\dfrac{d\kappa_2 }{dt}= -h(t)e^{\kappa_2(t)}. \label{Ricati7}
\end{equation}
Now, as usual, the  substitution
\begin{equation}
\alpha = -\dfrac{\mu ^{\prime }(t)}{4a(t)\mu (t)}-\dfrac{d(t)}{2a(t)%
}  \label{Ricati8}
\end{equation}%
transforms the Riccati equation (\ref{Ricati1}) into the characteristic equation (\ref{Carac__1}). In this sense, using equations (\ref{mu})-(\ref{kappa0}) we can solve (\ref{Ricati1})-(\ref{Ricati5}), but equations (\ref{Ricati6})-(\ref{Ricati7}) must be solved separately. In fact, the solution of this Riccati system can be found in the Appendix of this work.
\end{proof}
Unlike Theorem \ref{Th1}, this last result enables us to create a family of exponential-type solutions for the system (\ref{LinearRD1})-(\ref{LinearRD2}) in which both solutions, $\psi$ and $\varphi$, are directly dependent on the interaction coefficient $h(t)$. Next, an example of Theorem \ref{Th1_1} is shown. 
\subsubsection{\textbf{Constant diffusivity and a periodic interaction coefficient}}
The following  system with variable coefficients
{\small
\begin{eqnarray}
    \psi _{t} &=& \frac{1}{2}\psi_{xx} + 2\left(x^2 + \frac{1}{4}-3x\sin(3t)-x\cos(3t)\right)\psi  + \left(2x-\cos(3t) \right) \psi_x + 3\sin(6t)e^{\sin^2(3t)}\varphi,  \label{Ex2aa} \\
    \varphi _{t} &=& \frac{1}{2}\varphi_{xx} + 2\left(x^2 + \frac{1}{4}-3x\sin(3t)-x\cos(3t)\right)\varphi  + \left(2x-\cos(3t) \right) \varphi_x
\label{Ex2bb}
\end{eqnarray}}
has the  solutions 
\begin{equation}
  \psi(x,t) = \exp\left[-\left(x-\cos(3t)\right)^2-\frac{t}{2}\right],  
\end{equation}
\begin{equation}
  \varphi(x,t) = \exp\left[-\left(x-\cos(3t)\right)^2-\sin^2(3t)-\frac{t}{2}\right].
\end{equation} 
\begin{figure}[h!]
\centering
\subfigure[Profile of the solution $\psi$.]{\includegraphics[scale=0.32]{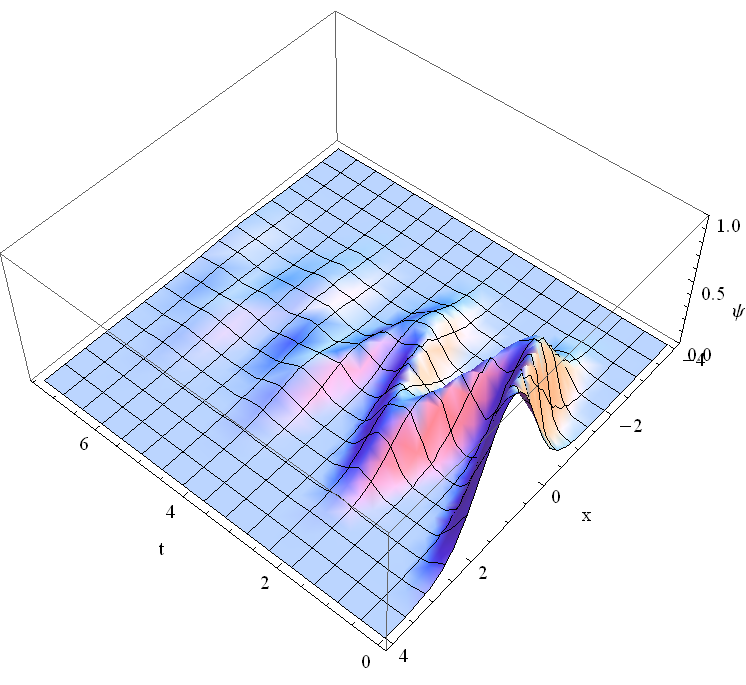}}
\subfigure[Contour of the solution $\psi$.]{\includegraphics[scale=0.27]{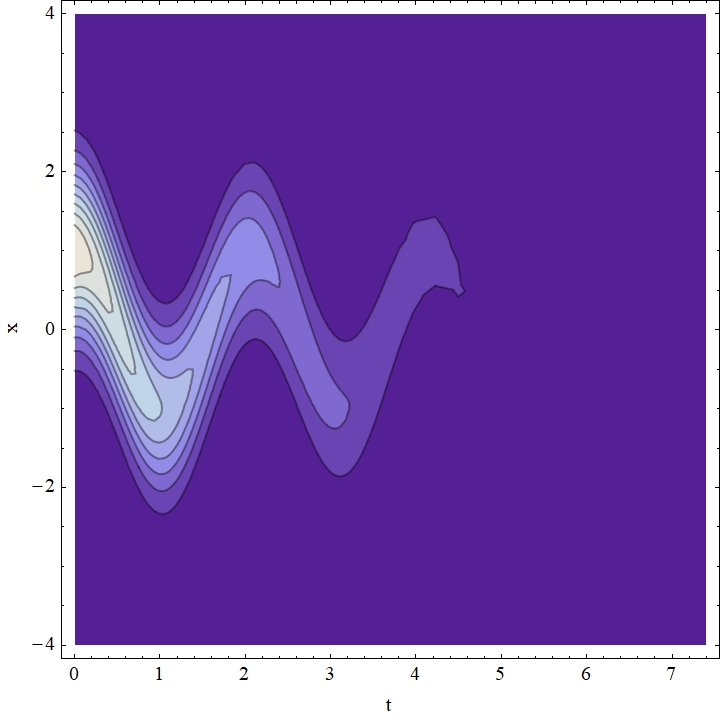}}
\subfigure[Profile of the solution $\varphi$.]{\includegraphics[scale=0.32]{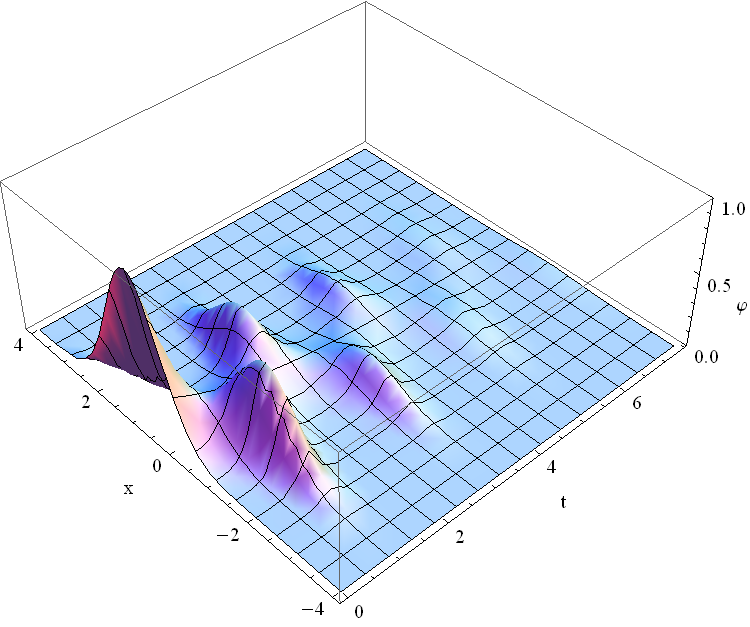}}
\subfigure[Contour of the solution $\varphi$.]{\includegraphics[scale=0.33]{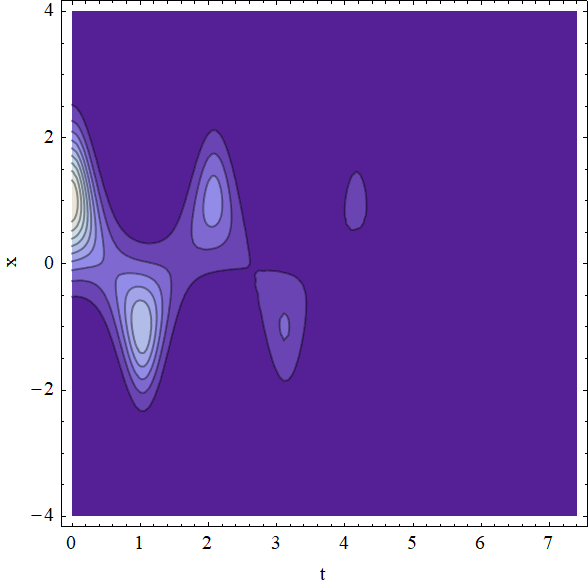}}
\caption{Solutions for the system  (\ref{Ex2aa})-(\ref{Ex2bb}). The  profiles of the functions $\psi$ and $\varphi$ are shown in (a) and (c). In the contours of $\psi$ and $\varphi$, Figures (b) and (d), the bending dynamics are clearly observed.}\label{Fig2_2}
\end{figure}
The periodicity of the interaction coefficient causes an interesting dynamic: The central axis of the solution  bends left and right over time and eventually disappears due to time exponential decay. The aforementioned dynamics can be seen in Figure \ref{Fig2_2}.
\subsection{Solutions for the generalized Lotka-Volterra system}
In the present section, we introduce a general diffusive Lotka-Volterra system (two and three components) and provide a mechanism for producing solutions by means of the classical models (\ref{RD1})-(\ref{RD2}) and (\ref{RD3})-(\ref{RD5}).
\begin{theorem}[\textbf{Generalized Diffusive Lotka-Volterra System}]\label{Th2}
The variable coefficient coupled reaction-diffusion system 
{\footnotesize
\begin{eqnarray}
\psi _{t} &=&a(t)\psi_{xx} -(b\left(
t\right) x^{2}-d(t) - L_{1}(t)-xf(t))\psi -(g(t)-c\left( t\right) x)\psi
_{x}+ (h_{1}(x,t)\psi + r_1(x,t)\varphi)\psi,  \label{GRDS1} \\
\varphi _{t} &=&a(t)\varphi_{xx} -(b\left(
t\right) x^{2}-d(t) - L_{2}(t)-xf(t))\varphi -(g(t)-c\left( t\right) x)\varphi
_{x}+ (h_{2}(x,t)\psi + r_2(x,t) \varphi)\varphi,  \label{GRDS2}
\end{eqnarray}%
} can be transformed into the classical Lotka-Volterra system
\begin{equation}
  u_{\tau} = u_{\xi \xi} + u(a_1 -b_1 u - c_1 v),  \label{RDS1}
\end{equation}
\begin{equation}
 v_{\tau} = v_{\xi \xi} + v(a_2 -b_2 u - c_2 v).  \label{RDS2}
\end{equation}
\end{theorem}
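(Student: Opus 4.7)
The plan is to carry over the similarity-transformation strategy from Theorem \ref{Th1} essentially verbatim, with the new difficulty being to manage the quadratic interaction terms $(h_i\psi+r_i\varphi)\psi$ and $(h_i\psi+r_i\varphi)\varphi$. First I would substitute the ansatz (\ref{subst__1})-(\ref{subst__2}) into (\ref{GRDS1})-(\ref{GRDS2}), reusing the derivative formulas (\ref{D1})-(\ref{D3}) without modification. The linear side of each equation, after collecting the coefficients of $x^{2}, x, 1$ in the exponent and of $u_\tau, u_{\xi\xi}, u_\xi, u$, reproduces exactly the Riccati system (\ref{Rica_1})-(\ref{Rica_6}); the substitution (\ref{Sus_1}) then reduces this to the characteristic equation (\ref{Carac__1}).

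The genuinely new step is matching the nonlinear terms. After substitution, each product $\psi^{2}, \psi\varphi, \varphi^{2}$ generates a factor $\mu^{-1}e^{2S}$ with $S=\alpha(t)x^{2}+\delta(t)x+\kappa(t)$, while the fully reduced linear side collapses to a multiple of $\mu^{-1/2}e^{S}$. Dividing both sides by this common factor leaves the identity
\begin{equation*}
\gamma'(t)u_\tau - a(t)\beta^{2}(t)u_{\xi\xi} - L_1(t)u \;=\; h_1(x,t)\mu^{-1/2}(t)e^{S(x,t)}u^{2} + r_1(x,t)\mu^{-1/2}(t)e^{S(x,t)}uv ,
\end{equation*}
together with its $\varphi$-analogue. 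To recover the target equation (\ref{RDS1}) multiplied through by $\gamma'(t)=a(t)\beta^{2}(t)$, I would impose
\begin{equation*}
L_i(t)=a_i\, a(t)\beta^{2}(t), \qquad h_i(x,t)=-b_i\, a(t)\beta^{2}(t)\sqrt{\mu(t)}\,e^{-S(x,t)}, \qquad r_i(x,t)=-c_i\, a(t)\beta^{2}(t)\sqrt{\mu(t)}\,e^{-S(x,t)},
\end{equation*}
for $i=1,2$. This choice simultaneously absorbs the excess $\mu^{-1/2}e^{S}$ factor generated by the quadratic nonlinearity and produces the constant Lotka-Volterra coefficients $a_i, b_i, c_i$ in the $(\xi,\tau)$ variables.

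The main obstacle, in my view, is conceptual rather than computational: one has to recognize that the interaction kernels $h_i$ and $r_i$ cannot be purely time-dependent but must carry the inverse-Gaussian profile $\sqrt{\mu(t)}\,e^{-\alpha(t)x^{2}-\delta(t)x-\kappa(t)}$ that undoes the $x$-dependence introduced by squaring the ansatz. This is precisely why the statement writes $h_i(x,t)$ and $r_i(x,t)$ rather than $h_i(t)$ and $r_i(t)$, and it is the reason a naive extension of Theorem \ref{Th1} (where the linear coupling term required only a time-dependent $h(t)$) fails here. Once this is identified, the closing step is a verification that, under the above choices together with (\ref{Rica_1})-(\ref{Rica_6}), the transformed pair $(u,v)$ satisfies (\ref{RDS1})-(\ref{RDS2}); the explicit formulas (\ref{mu})-(\ref{kappa0}) from the appendix then express $\alpha,\beta,\gamma,\delta,\varepsilon,\kappa,\mu$ in terms of the fundamental solution of (\ref{Carac__1}), rendering the reduction completely explicit.
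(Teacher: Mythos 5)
Your proposal is correct and follows essentially the same route as the paper: the same similarity ansatz (\ref{substitution_1})--(\ref{substitution_2}), the same reduction of the linear part to the Riccati system (\ref{rica1})--(\ref{rica6}) with the substitution (\ref{Sus_1}) leading to (\ref{Carac__1}), and the identical integrability conditions $L_i = a_i a\beta^2$, $h_i = -b_i a\beta^{2}\mu^{1/2}e^{-(\alpha x^{2}+\delta x+\kappa)}$, $r_i = -c_i a\beta^{2}\mu^{1/2}e^{-(\alpha x^{2}+\delta x+\kappa)}$ used in the paper's proof of Theorem \ref{Th2}. Your added observation that the quadratic nonlinearity forces the interaction kernels to carry the $x$-dependent factor $\mu^{1/2}e^{-S}$ (unlike the purely time-dependent $h(t)$ of Theorem \ref{Th1}) is accurate and consistent with the paper.
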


\begin{proof}
We will assume solutions for (\ref{GRDS1})-(\ref{GRDS2}) in the form 
\begin{equation}
\psi (x,t)=\frac{1}{\sqrt{\mu (t)}}e^{\alpha (t)x^{2}+\delta (t)x+\kappa
(t)}u(\xi ,\tau ),\qquad \xi =\beta (t)x+\varepsilon (t),\qquad \tau
=\gamma (t),\label{substitution_1}
\end{equation}
and 
\begin{equation}
\varphi (x,t)=\frac{1}{\sqrt{\mu (t)}}e^{\alpha (t)x^{2}+\delta
(t)x+\kappa (t)}v(\xi ,\tau ).\qquad  \label{substitution_2}
\end{equation}
\noindent After substituting (\ref{substitution_1})-(\ref{substitution_2}) in (\ref{GRDS1})-(\ref{GRDS2}), we obtain again the  Riccati system  (\ref{Rica_1})-(\ref{Rica_6}), i.e.,
\begin{equation}
\dfrac{d\alpha }{dt}+b(t)= 2c(t)\alpha +4a(t)\alpha^{2},  \label{rica1}
\end{equation}%
\begin{equation}
\dfrac{d\beta }{dt}= (c(t)+4a(t)\alpha(t))\beta,  \label{rica2}
\end{equation}%
\begin{equation}
\dfrac{d\gamma }{dt}= a(t)\beta^{2}(t),
\label{rica3}
\end{equation}%
\begin{equation}
\dfrac{d\delta }{dt} + 2\alpha (t)g(t) = (c(t)+4a(t)\alpha(t))\delta + f(t),
\label{rica4}
\end{equation}%
\begin{equation}
\dfrac{d\varepsilon }{dt}=(2a(t)\delta(t)-g(t))\beta (t),  \label{rica5}
\end{equation}%
\begin{equation}
\dfrac{d\kappa }{dt}=a(t)\delta^{2}(t)-g(t)\delta (t).  \label{rica6}
\end{equation}%
 Here, the characteristic equation (\ref{Carac__1}) can be obtained by the usual substitution (\ref{Ricati8}). Additionally, if we assume the conditions 
$$L_{i}(t) = a_{i}a(t)\beta^2(t), \quad \quad  h_{i}(x,t)=-b_{i} a(t)\beta^{2}(t)\mu^{1/2}(t)e^{-(\alpha (t)x^{2}+\delta
(t)x+\kappa (t))}, $$
$$  r_{i}(x,t)=-c_{i} a(t)\beta^{2}(t)\mu^{1/2}(t)e^{-(\alpha (t)x^{2}+\delta
(t)x+\kappa (t))}, \quad i = 1,2, $$
 the functions $u(\xi,\tau)$ and $v(\xi,\tau)$ will satisfy the system (\ref{RDS1})-(\ref{RDS2}). 
\end{proof}
In simple terms, the integrability of the variable coefficients Lotka-Volterra system is consequences of the Riccati system and the exponential structure of the interaction coefficients $h_i(x,t)$ and $r_i(x,t)$. The subsequent  examples illustrate the preceding result.
\subsubsection{\textbf{Constant diffusivity and tanh-type interaction coefficients}} In this example we demonstrate the existence of travelling wave-type solutions for the general model. Indeed, consider the system
{\small
\begin{eqnarray}
\psi _{t} &=&\frac{1}{2}\psi_{xx} + \frac{1}{2}\left(\tanh t + 2a_1 -1\right)\psi - \left(\frac{b_1 e^{-\kappa(0)}}{2\sqrt{1 + \tanh t}}\psi + \frac{c_1 e^{-\kappa(0)}}{2\sqrt{1 + \tanh t}}\varphi\right)\psi,  \label{Ex3a} \\
\varphi _{t} &=&\frac{1}{2}\varphi_{xx} + \frac{1}{2}\left(\tanh t + 2a_2 -1\right)\varphi - \left(\frac{b_2 e^{-\kappa(0)}}{2\sqrt{1 + \tanh t}}\psi + \frac{c_2 e^{-\kappa(0)}}{2\sqrt{1 + \tanh t}}\varphi\right)\varphi. \label{Ex3b}
\end{eqnarray}%
}
The solution of the Riccati system associated with this problem is
\begin{eqnarray*}
 \alpha(t)= \delta(t) = 0, \quad \beta(t) =  \sqrt{2}, \quad  \gamma(t)= t,  \quad  \varepsilon(t) = \varepsilon(0), \quad   \kappa(t) = \kappa(0), \quad \mu(t) = 1 + \tanh t.
\end{eqnarray*}
 Then, a solution for the system (\ref{Ex3a})-(\ref{Ex3b}) has
 the following traveling wave form:
\begin{equation}
  \psi(x,t) = \frac{Ae^{\kappa(0)}}{4B\sqrt{1 +\tanh t}} \left[1 - \tanh \left(\frac{\sqrt{A}}{24}(\sqrt{2}x+\varepsilon(0)) - \frac{5A}{12}t \right)\right]^2, 
\end{equation}
\begin{equation}
  \varphi(x,t) = \frac{e^{\kappa(0)}}{\sqrt{1 +\tanh t}} \left \{ \nu_0 + \nu_1\frac{A}{4B}\left[1 - \tanh \left(\frac{\sqrt{A}}{24}(\sqrt{2}x+\varepsilon(0)) - \frac{5A}{12}t \right)\right]^2 \right \}. 
\end{equation}
These solutions exhibit the asymptotic behaviors:
\begin{equation}
 \left(\psi,\varphi\right) \rightarrow \left(\frac{a_1 e^{\kappa(0)}}{\sqrt{2}b_1},0\right), \quad \mbox{as} \quad t \rightarrow \infty,   
\end{equation}

\begin{figure}[h!]
\centering
\subfigure[Profile of the solution $\psi$.]{\includegraphics[scale=0.31]{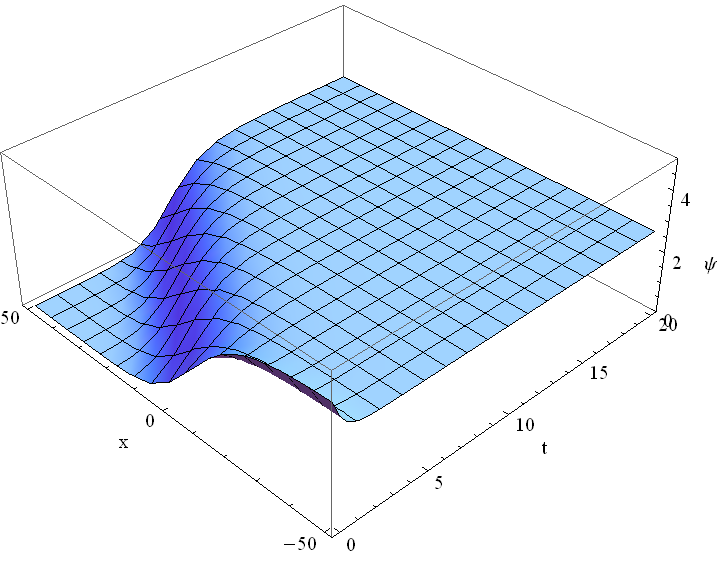}}
\subfigure[Contour of the solution $\psi$.]{\includegraphics[scale=0.31]{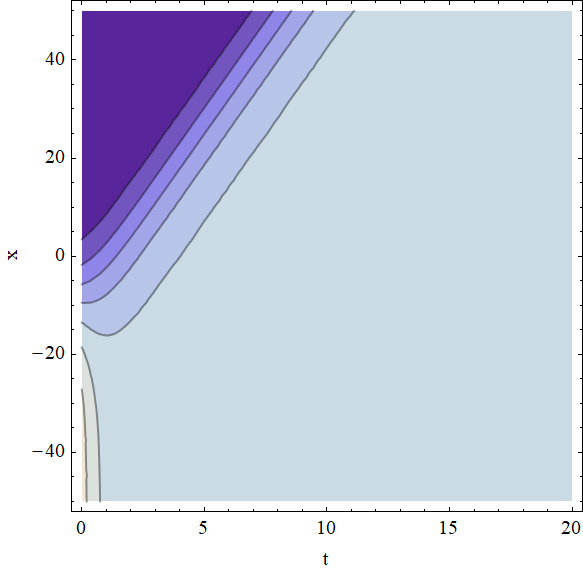}}
\subfigure[Profile of the solution $\varphi$.]{\includegraphics[scale=0.31]{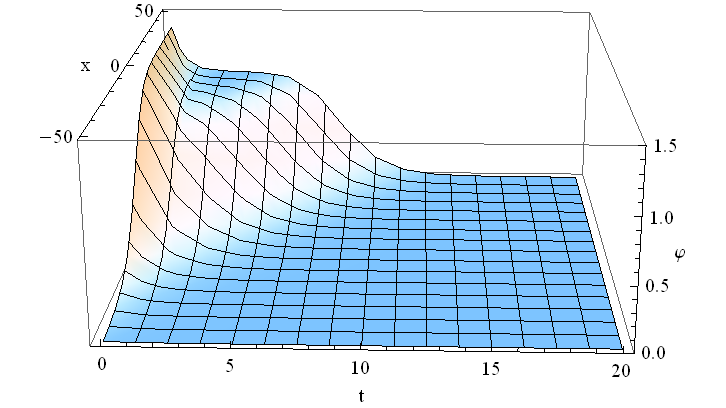}}
\subfigure[Contour of the solution $\varphi$.]{\includegraphics[scale=0.26]{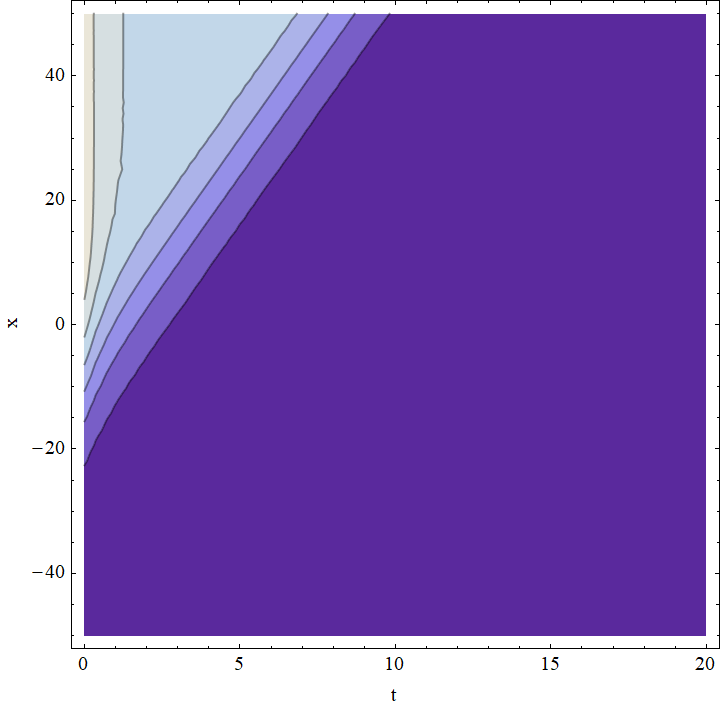}}
\caption{Figures (a) and (c) describe the traveling wave solutions for the system  (\ref{Ex3a})-(\ref{Ex3b}) with $a_1 = c_2 = c_1 = 2,$ $a_2 = 1,$ $b_1 = b_2 = \sqrt{2}$, $\nu_0 = \frac{1}{2},$ $\nu_1 = -\frac{\sqrt{2}}{4}$, $A = 1,$ $B = \frac{\sqrt{2}}{2}$, $\kappa(0) = 1$, and $\varepsilon(0) = 2$. The contours of these solutions are shown in (b) and (d). } \label{Fig3}
\end{figure}
or
\begin{equation}
   \left(\psi,\varphi\right) \rightarrow \left(\frac{a_1(\hat{C}-1)e^{\kappa(0)}}{\sqrt{2}b_2(\hat{C}-\hat{B})},\frac{a_1(1-\hat{B})e^{\kappa(0)}}{\sqrt{2}c_2(\hat{C}-\hat{B})}\right), \quad \mbox{as} \quad t \rightarrow \infty.
\end{equation}
The constants $\hat{B}, \hat{C},$ are the same as those established in Section \ref{Sect2}. The profiles of these solutions are displayed in Figure \ref{Fig3}. A \emph{plausible interpretation} of these solutions would be to depict two types of scenarios in a competition between two species. In the first scenario, one species dominates the other until it is extinct, whereas in the second situation, both species coexist over time. This is clear evidence that some aspects of the classical Lotka-Volterra equations are applicable to the general system.

\subsubsection{\textbf{Constant diffusivity and interaction coefficients with periodic central axes}}
Let's assume the system of variable coefficients equations
{\scriptsize
\begin{eqnarray}
\psi _{t} &=&\frac{1}{2}\psi_{xx} + \left(2 + \frac{1}{2}a_1e^{-4t}\right)\psi +\left(2x-\sin t -\frac{1}{2}\cos t\right)\psi_x  -\frac{1}{2}e^{-4t}\exp\left[2\left(x-\frac{1}{2}\sin t\right)^2\right] \left(b_1 \psi + c_1 \varphi\right)\psi,  \label{Ex4a} \\
\varphi _{t} &=&\frac{1}{2}\varphi_{xx} + \left(2 + \frac{1}{2}a_2e^{-4t}\right)\varphi +\left(2x-\sin t -\frac{1}{2}\cos t\right)\varphi_x  -\frac{1}{2}e^{-4t}\exp\left[2\left(x-\frac{1}{2}\sin t\right)^2\right] \left(b_2 \psi + c_2 \varphi\right)\varphi. \label{Ex4b}
\end{eqnarray}%
}
 In this case, we have the functions
\begin{eqnarray*}
 \alpha(t)= -2,  \quad \quad \beta(t) =  e^{-2t}, \quad  \quad \gamma(t)= \frac{1}{8}-\frac{e^{-4t}}{8}, \quad  \quad \delta(t) = 2\sin t, \\\ \varepsilon(t) = -\frac{1}{2}e^{-2t}\sin t, \quad \quad   \kappa(t) = -\frac{1}{2}\sin^2 t, \quad \quad  \mu(t) = 1,
\end{eqnarray*}
 and the system admits a solution given by the following  equations
\begin{equation}
  \psi(x,t) = \exp\left[-2\left(x-\frac{1}{2}\sin t\right)^2\right] \left[1 - \tanh \left(\frac{\sqrt{A}}{24}e^{-2t}(x-\frac{1}{2}\sin t) - \frac{5A}{96}(1-e^{-4t}) \right)\right]^2, 
\end{equation}
\begin{equation}
  \varphi(x,t) = \exp\left[-2\left(x-\frac{1}{2}\sin t\right)^2\right]  \left \{ \nu_0 + \nu_1\frac{A}{4B}\left[1 - \tanh \left(\frac{\sqrt{A}}{24}e^{-2t}(x-\frac{1}{2}\sin t) - \frac{5A}{96}(1-e^{-4t})  \right)\right]^2 \right \}. 
\end{equation}
As can be seen in Figure \ref{Fig4}, solutions share an interesting bending dynamic controlled by the function $\sin t$. Due to the similar behavior of these solutions and with the intention of providing the reader with visual proof of the difference between them, we report in Figure \ref{Fig4}(d) the profile of $|\psi - \varphi|$. 

The next proposition establishes a natural generalization of Theorem \ref{Th2}, corresponding to the three-component Lotka-Volterra equations. 
\begin{figure}[h!]
\centering
\subfigure[Profile of the solution $\psi$.]{\includegraphics[scale=0.34]{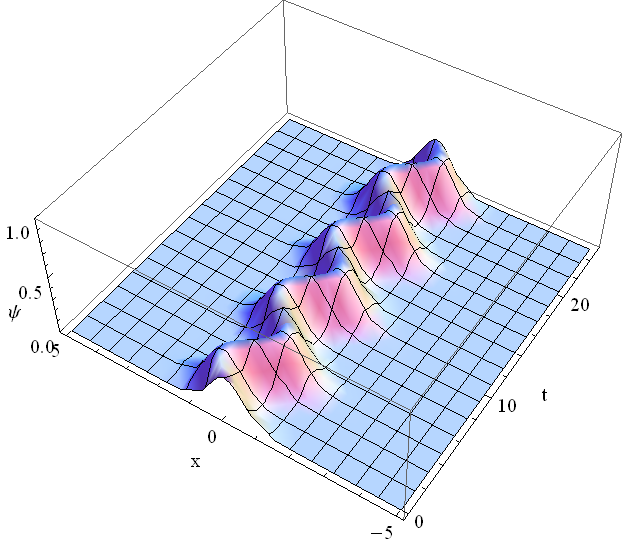}}
\subfigure[Contour of the solution $\psi$.]{\includegraphics[scale=0.31]{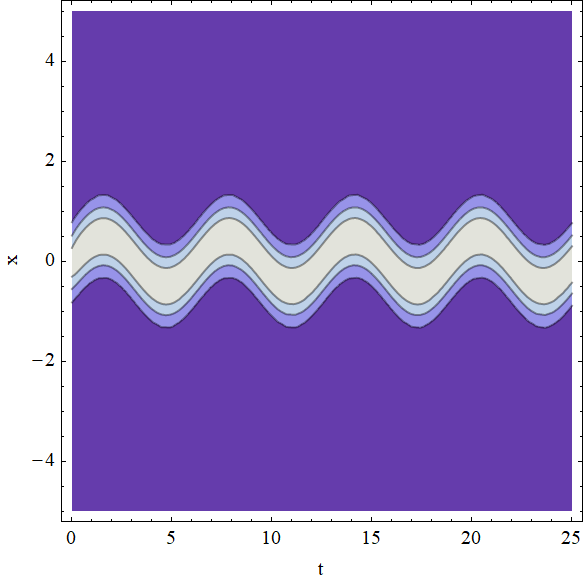}}
\subfigure[Profile of the solution $\varphi$.]{\includegraphics[scale=0.33]{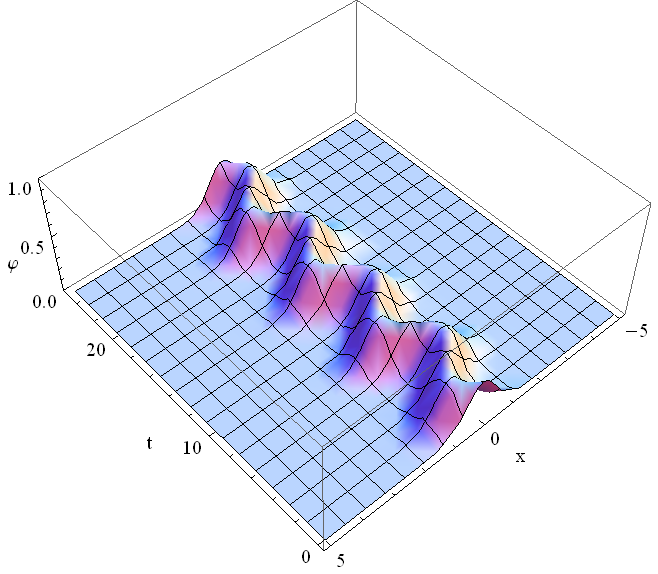}}
\subfigure[Profile of the difference $|\varphi-\psi|$.]{\includegraphics[scale=0.27]{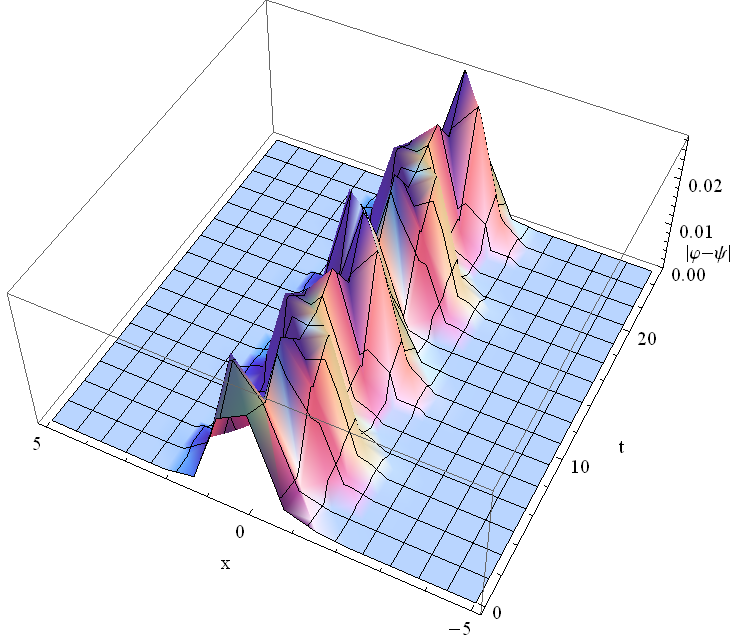}}
\caption{Solutions for the system  (\ref{Ex4a})-(\ref{Ex4b}) for the parameters $a_1 = c_2 = c_1 = 2,$ $a_2 = 1,$ $b_1 = b_2 = \sqrt{2}$, $\nu_0 = \frac{1}{2},$ $\nu_1 = -\frac{\sqrt{2}}{4}$, $A = 1,$ $B = \frac{\sqrt{2}}{2}$.} \label{Fig4}
\end{figure}

\begin{proposition}[\textbf{Generalized Three-Component Lotka-Volterra System}]\label{PROPOSITION1}
The general three-component Lotka-Volterra equations 
{\small  
\begin{eqnarray}
\psi _{t} &=&a(t)\psi_{xx} -(b\left(
t\right) x^{2}-d(t) - L_{1}(t)-xf(t))\psi -(g(t)-c\left( t\right) x)\psi
_{x}+ (h_{1}\psi + r_1\varphi + s_1\phi )\psi,  \label{GRDS_1} \\
\varphi _{t} &=&a(t)\varphi_{xx} -(b\left(
t\right) x^{2}-d(t) - L_{2}(t)-xf(t))\varphi -(g(t)-c\left( t\right) x)\varphi
_{x}+ (h_{2}\psi + r_2 \varphi + s_2\phi )\varphi,  \label{GRDS_2} \\
\phi _{t} &=&a(t)\phi_{xx} -(b\left(
t\right) x^{2}-d(t) - L_{3}(t)-xf(t))\phi -(g(t)-c\left( t\right) x)\phi
_{x}+ (h_{3}\psi + r_3 \varphi + s_3\phi )\phi  \label{GRDS_3}
\end{eqnarray}%
} can be transformed into the  system 
\begin{equation}
  u_{\tau} = u_{\xi \xi} + u(a_1 -b_1 u - c_1 v - e_1 w),  \label{RDS_1}
\end{equation}
\begin{equation}
 v_{\tau} = v_{\xi \xi} + v(a_2 -b_2 u - c_2 v - e_2w).  \label{RDS_2}
\end{equation}
\begin{equation}
 w_{\tau} = w_{\xi \xi} + w(a_3 -b_3 u - c_3 v - e_3w).  \label{RDS_3}
\end{equation}
\end{proposition}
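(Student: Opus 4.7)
The plan is to mirror the proof of Theorem \ref{Th2}, adapting the two-field similarity transformation to three fields. I would propose the common ansatz
\begin{align*}
\psi(x,t) &= \frac{1}{\sqrt{\mu(t)}}\,e^{\alpha(t)x^{2}+\delta(t)x+\kappa(t)}\,u(\xi,\tau), \\
\varphi(x,t) &= \frac{1}{\sqrt{\mu(t)}}\,e^{\alpha(t)x^{2}+\delta(t)x+\kappa(t)}\,v(\xi,\tau), \\
\phi(x,t) &= \frac{1}{\sqrt{\mu(t)}}\,e^{\alpha(t)x^{2}+\delta(t)x+\kappa(t)}\,w(\xi,\tau),
\end{align*}
with $\xi=\beta(t)x+\varepsilon(t)$ and $\tau=\gamma(t)$, so that $\psi$, $\varphi$, $\phi$ all share a single envelope and the same similarity variables.

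Differentiating as in (\ref{D1})-(\ref{D3}) and substituting into the linear parts of (\ref{GRDS_1})-(\ref{GRDS_3}), I would match the coefficients of $x^{2}$, $x$, and of the various derivatives of $u$, $v$, $w$. Because the three fields carry the same prefactor, this matching is structurally identical to the one already performed in Theorem \ref{Th2}, and it produces exactly the Riccati system (\ref{Rica_1})-(\ref{Rica_6}) together with the characteristic equation (\ref{Carac__1}) via the substitution (\ref{Sus_1}). I would therefore simply invoke that computation rather than repeat it here.

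Next I would balance the nonlinear couplings. After dividing each equation through by the common factor $\mu^{-1/2}e^{\alpha x^{2}+\delta x+\kappa}$, the coupling on the right-hand side of, say, (\ref{GRDS_1}) becomes $\mu^{-1/2}e^{\alpha x^{2}+\delta x+\kappa}(h_{1}u+r_{1}v+s_{1}w)u$, which carries exactly one residual envelope factor. Using $d\gamma/dt=a(t)\beta^{2}(t)$, I would then impose
\begin{align*}
L_{i}(t) &= a_{i}\,a(t)\beta^{2}(t), \\
h_{i}(x,t) &= -b_{i}\,a(t)\beta^{2}(t)\,\mu^{1/2}(t)\,e^{-(\alpha(t)x^{2}+\delta(t)x+\kappa(t))}, \\
r_{i}(x,t) &= -c_{i}\,a(t)\beta^{2}(t)\,\mu^{1/2}(t)\,e^{-(\alpha(t)x^{2}+\delta(t)x+\kappa(t))}, \\
s_{i}(x,t) &= -e_{i}\,a(t)\beta^{2}(t)\,\mu^{1/2}(t)\,e^{-(\alpha(t)x^{2}+\delta(t)x+\kappa(t))},
\end{align*}
for $i=1,2,3$. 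These choices convert $\partial_{t}$ into $\gamma'(t)\partial_{\tau}$, normalize the diffusion to $u_{\xi\xi}$, $v_{\xi\xi}$, $w_{\xi\xi}$, and reduce the cubic couplings precisely to $u(a_{i}-b_{i}u-c_{i}v-e_{i}w)$ and its $v$, $w$ analogues, yielding (\ref{RDS_1})-(\ref{RDS_3}).

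The main, and essentially only, obstacle is the bookkeeping: verifying that the single common envelope $\mu^{-1/2}e^{\alpha x^{2}+\delta x+\kappa}$ cancels consistently across the nine cross-coupling terms distributed among the three equations. That cancellation is hard-wired into the ansatz, since giving $\psi$, $\varphi$, $\phi$ the same exponential envelope is precisely what forces all three groups $\{h_{i},r_{i},s_{i}\}$ to carry the uniform factor $\mu^{1/2}e^{-(\alpha x^{2}+\delta x+\kappa)}$. No new structural ingredient beyond the one already used in Theorem \ref{Th2} is required, which is why the statement is given as a proposition rather than a theorem.
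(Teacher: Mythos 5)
Your proposal is correct and follows essentially the same route as the paper: the same common-envelope ansatz for $\psi$, $\varphi$, $\phi$, the same reduction of the linear parts to the Riccati system of Theorem \ref{Th2}, and the same integrability conditions $L_{i}=a_{i}a(t)\beta^{2}(t)$, $h_{i}=-b_{i}a(t)\beta^{2}(t)\mu^{1/2}e^{-(\alpha x^{2}+\delta x+\kappa)}$ (and analogously for $r_{i}$, $s_{i}$) with $i=1,2,3$. No discrepancies to report.
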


\begin{proof}
By extending the same ideas from Theorem \ref{Th2}, we use the substitutions
\begin{equation}
\psi (x,t)=\frac{1}{\sqrt{\mu (t)}}e^{\alpha (t)x^{2}+\delta (t)x+\kappa
(t)}u(\xi ,\tau ),\qquad \xi =\beta (t)x+\varepsilon (t),\qquad \tau
=\gamma (t)\label{subst1}
\end{equation}
and 
\begin{equation}
\varphi (x,t)=\frac{1}{\sqrt{\mu (t)}}e^{\alpha (t)x^{2}+\delta
(t)x+\kappa (t)}v(\xi ,\tau ), \qquad \quad \phi (x,t)=\frac{1}{\sqrt{\mu (t)}}e^{\alpha (t)x^{2}+\delta
(t)x+\kappa (t)}w(\xi ,\tau ). \label{subst2}
\end{equation}
As a consequence of imposing the integrability conditions 
$$  r_{i}(x,t)=-c_{i} a(t)\beta^{2}(t)\mu^{1/2}(t)e^{-(\alpha (t)x^{2}+\delta
(t)x+\kappa (t))}, \quad  s_{i}(x,t)=-e_{i} a(t)\beta^{2}(t)\mu^{1/2}(t)e^{-(\alpha (t)x^{2}+\delta
(t)x+\kappa (t))}, $$
$$L_{i}(t) = a_{i}a(t)\beta^2(t), \quad \quad  h_{i}(x,t)=-b_{i} a(t)\beta^{2}(t)\mu^{1/2}(t)e^{-(\alpha (t)x^{2}+\delta
(t)x+\kappa (t))},  \quad i = 1,2,3,$$
and the coefficients to satisfy the Riccati system (\ref{rica1})-(\ref{rica6}) (and therefore equations (\ref{Sus_1})-(\ref{Carac__1})), the functions $u(\xi,\tau), \ v(\xi,\tau), \ w(\xi,\tau),$ will satisfy the system (\ref{RDS_1})-(\ref{RDS_3}). 
\end{proof}

\subsubsection{\textbf{Constant diffusivity and interaction coefficients with exponential growth}}
To exemplify this case, let us consider the system of equations
\begin{eqnarray}
\psi _{t} &=& \beta^{-2}(0)\psi_{xx} + \left(a_1 - 1 -\sech t \tanh t \right)\psi -\mu^{1/2}(0)e^{t-\sech t -\kappa(0)}\left(\psi + c_1 \varphi + e_1 \phi \right)\psi,  \label{Ex4_1a} \\
\varphi _{t} &=& \beta^{-2}(0)\varphi_{xx} + \left(a_1 - 1 -\sech t \tanh t \right)\varphi -\mu^{1/2}(0)e^{t-\sech t -\kappa(0)}\left(b_2 \psi +  \varphi + e_2 \phi \right)\varphi,  \label{Ex4_1b} \\
\phi_{t} &=& \beta^{-2}(0)\phi_{xx} + \left(a_1 - 1 -\sech t \tanh t \right)\phi -\mu^{1/2}(0)e^{t-\sech t -\kappa(0)}\left(b_3 \psi + c_3 \varphi +  \phi \right)\phi. \label{Ex4_1c}
\end{eqnarray}
Here, the Riccati system (\ref{rica1})-(\ref{rica6}) admits the solution 
\begin{eqnarray*}
\alpha(t) = \delta(t) = 0, \quad  \beta(t) = \beta(0), \quad  \gamma(t)= t + \gamma(0), \quad   \varepsilon(t) = \varepsilon(0), \ \kappa(t) = \kappa(0), \quad \mu(t) = \mu(0)e^{2t- 2\sech t}.
\end{eqnarray*}
 In these terms, the proposed system (\ref{Ex4_1a})-(\ref{Ex4_1c}) has solutions in the form
\begin{equation}
  \psi(x,t) =  \left(2 + \theta - \frac{a_1}{4}\right) \mu^{-1/2}(0) e^{-t-\sech t+ \kappa(0)} \left[ 1- \tanh\left( \beta(0)x + \varepsilon(0) - \theta (t + \gamma(0)) \right)  \right]^2, 
\end{equation}
\begin{equation}
  \varphi(x,t) = \frac{a_1}{4} \mu^{-1/2}(0) e^{-t-\sech t+ \kappa(0)} \left[ 1 + \tanh\left( \beta(0)x + \varepsilon(0) - \theta (t + \gamma(0)) \right)  \right]^2,
\end{equation}
\begin{equation}
  \phi(x,t) = \left(a_1 - \theta - 2\right) \mu^{-1/2}(0) e^{-t-\sech t+ \kappa(0)}\left[ 1- \tanh\left(  \beta(0)x + \varepsilon(0) - \theta (t + \gamma(0)) \right)  \right],
\end{equation}
provided  $a_1, \ b_2, \ b_3, \ c_1, \ c_3, \ e_1, \ e_2,$ satisfy the same relations established in Section \ref{Sect2}.
\subsection{Solutions for the generalized Gray-Scott model}
This subsection is focused on the study of the generalization of the classical Gray-Scott system. The main result of this part is stated as follows:
\begin{theorem}[\textbf{Generalized Gray-Scott Model}]\label{Th3}
The Gray-Scott model with variable coefficients 
{\footnotesize
\begin{eqnarray}
\psi _{t} &=&a(t)\psi_{xx} -(b\left(
t\right) x^{2}-d(t) - L_{1}(t)-xf(t))\psi -(g(t)-c\left( t\right) x)\psi
_{x} - h_{1}(x,t)\psi \varphi^2 + M_{1}(x,t),  \label{GenGrayScott1} \\
\varphi _{t} &=&a(t)\varphi_{xx} -(b\left(
t\right) x^{2}-d(t) - L_{1}(t)-xf(t))\varphi -(g(t)-c\left( t\right) x)\varphi
_{x}+h_{1}(x,t)\psi \varphi^2+ M_2(x,t), \label{GenGrayScott2} 
\end{eqnarray}%
} can be transformed into the standard  Gray-Scott model
\begin{equation}
  u_{\tau} = u_{\xi \xi} - uv^2 + b_1(1-u),  \label{GrayScott1} 
\end{equation}
\begin{equation}
 v_{\tau} = v_{\xi \xi} + uv^2 - b_1v + b_2. \label{GrayScott2} 
\end{equation}
\end{theorem}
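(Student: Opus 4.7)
The plan is to follow the similarity-transformation strategy of Theorems \ref{Th1} and \ref{Th2} with the same ansatz
\begin{equation*}
\psi(x,t) = \frac{1}{\sqrt{\mu(t)}}\, e^{S(x,t)}\, u(\xi,\tau), \qquad \varphi(x,t) = \frac{1}{\sqrt{\mu(t)}}\, e^{S(x,t)}\, v(\xi,\tau),
\end{equation*}
with $S(x,t) = \alpha(t)x^{2}+\delta(t)x+\kappa(t)$, $\xi = \beta(t)x+\varepsilon(t)$, $\tau = \gamma(t)$. First I would substitute this into (\ref{GenGrayScott1})-(\ref{GenGrayScott2}) using the derivative formulas (\ref{D1})-(\ref{D3}) already computed for Theorem \ref{Th1}. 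The coefficients of the $x^{2}$, $x$, and zeroth-order-in-$x$ terms that are linear in $u, v, u_{\xi}, v_{\xi}, u_{\xi\xi}, v_{\xi\xi}$ should match exactly when the auxiliary functions $\alpha,\beta,\gamma,\delta,\varepsilon,\kappa$ satisfy the Riccati system (\ref{Rica_1})-(\ref{Rica_6}), which I would again reduce to the characteristic equation (\ref{Carac__1}) via the substitution (\ref{Sus_1}). This portion is identical to the earlier proofs and I would only sketch it.

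The genuinely new content is pairing the cubic interaction $\mp h_{1}(x,t)\psi\varphi^{2}$ and the inhomogeneous sources $M_{1}, M_{2}$ with the terms $\mp uv^{2}$, $b_{1}$, and $b_{2}$ of (\ref{GrayScott1})-(\ref{GrayScott2}). Since $\psi\varphi^{2} = \mu^{-3/2}(t)\, e^{3S(x,t)}\, uv^{2}$, while after $\gamma'(t) = a(t)\beta^{2}(t)$ each term of the target equation will appear multiplied by the universal prefactor $a(t)\beta^{2}(t)\mu^{-1/2}(t)\, e^{S(x,t)}$, matching the cubic contributions forces
\begin{equation*}
h_{1}(x,t) = a(t)\beta^{2}(t)\mu(t)\, e^{-2S(x,t)}.
\end{equation*}
A parallel comparison for the constant-in-$(u,v)$ sources yields
\begin{equation*}
M_{1}(x,t) = b_{1}\, a(t)\beta^{2}(t)\mu^{-1/2}(t)\, e^{S(x,t)}, \qquad M_{2}(x,t) = b_{2}\, a(t)\beta^{2}(t)\mu^{-1/2}(t)\, e^{S(x,t)},
\end{equation*}
while the two linear reactions $-b_{1}u$ and $-b_{1}v$ are absorbed by the familiar balance $L_{1}(t) = -b_{1}\, a(t)\beta^{2}(t)$; note the same $L_{1}$ serves both equations, exactly as reflected in the statement of the theorem.

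The computations are routine. The main care I expect to need is the bookkeeping of the different weights in $\mu(t)$ and in the exponent of $e^{S(x,t)}$ that the ansatz places in front of terms of different multilinearity in $(u,v)$: the cubic nonlinearity is born with weight $\mu^{-3/2}e^{3S}$, the linear terms with $\mu^{-1/2}e^{S}$, and the sources with none at all, so the choices of $h_{1}, L_{1}, M_{1}, M_{2}$ above are precisely the ones that rebalance each contribution to the common prefactor $a\beta^{2}\mu^{-1/2}e^{S}$. Once these balances are imposed and the Riccati system (\ref{Rica_1})-(\ref{Rica_6}) is satisfied, the functions $u(\xi,\tau)$ and $v(\xi,\tau)$ obey (\ref{GrayScott1})-(\ref{GrayScott2}), completing the reduction.
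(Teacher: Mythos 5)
Your proposal is correct and matches the paper's own proof essentially step for step: the same similarity ansatz $\psi=\mu^{-1/2}e^{S}u$, $\varphi=\mu^{-1/2}e^{S}v$ with the Riccati system (\ref{Rica_1})--(\ref{Rica_6}) reduced via (\ref{Sus_1}) to (\ref{Carac__1}), and the same balance conditions $h_{1}=a(t)\beta^{2}(t)\mu(t)e^{-2S}$, $L_{1}=-b_{1}a(t)\beta^{2}(t)$, $M_{i}=b_{i}a(t)\beta^{2}(t)\mu^{-1/2}(t)e^{S}$. The weight bookkeeping you describe (cubic terms carrying $\mu^{-3/2}e^{3S}$, linear terms $\mu^{-1/2}e^{S}$, sources none, all rebalanced to the common prefactor $a\beta^{2}\mu^{-1/2}e^{S}$) is exactly the content of the paper's integrability conditions.
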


\begin{proof}
As in the previous results, we use the substitutions
\begin{equation}
\psi (x,t)=\frac{1}{\sqrt{\mu (t)}}e^{\alpha (t)x^{2}+\delta (t)x+\kappa
(t)}u(\xi ,\tau ),\qquad \xi =\beta (t)x+\varepsilon (t),\qquad \tau
=\gamma (t),\label{SUBST1}
\end{equation}
and 
\begin{equation}
\varphi (x,t)=\frac{1}{\sqrt{\mu (t)}}e^{\alpha (t)x^{2}+\delta
(t)x+\kappa (t)}v(\xi ,\tau ).\qquad  \label{SUBST2}
\end{equation}
By assuming the coefficients satisfy the Riccati system
(\ref{rica1})-(\ref{rica6}) and imposing the integrability conditions 
$$  h_{1}(x,t)= a(t)\beta^{2}(t)\mu(t)e^{-2(\alpha (t)x^{2}+\delta
(t)x+\kappa (t))}, \quad L_{1}(t) = -b_{1}a(t)\beta^2(t), $$
$$ M_{i}(x,t)= b_{i} a(t)\beta^{2}(t)\mu^{-1/2}(t)e^{\alpha (t)x^{2}+\delta
(t)x+\kappa (t)},  \quad i = 1,2,$$
is easy to see that the functions $u(\xi,\tau), \ v(\xi,\tau),$  satisfy the Gray-Scott model (\ref{GrayScott1})-(\ref{GrayScott2}). 
\end{proof}

\begin{figure}[h!]
\centering
\subfigure[Profile of the solution $\psi$.]{\includegraphics[scale=0.33]{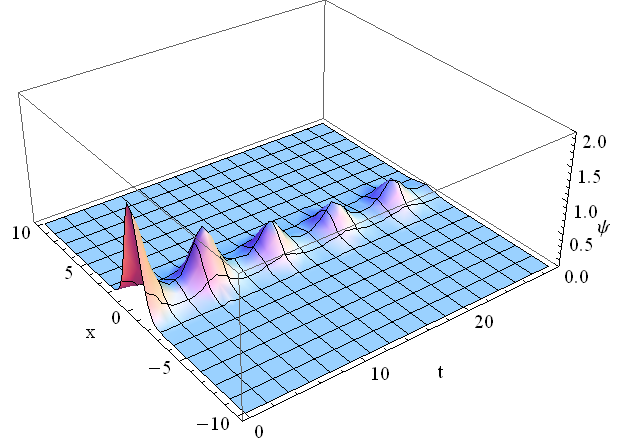}}
\subfigure[Contour of the solution $\psi$.]{\includegraphics[scale=0.36]{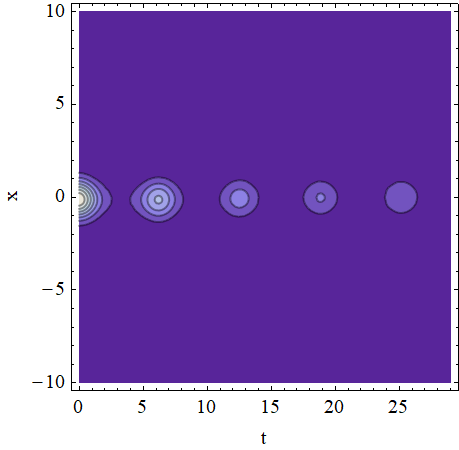}}
\subfigure[Profile of the solution $\varphi$.]{\includegraphics[scale=0.33]{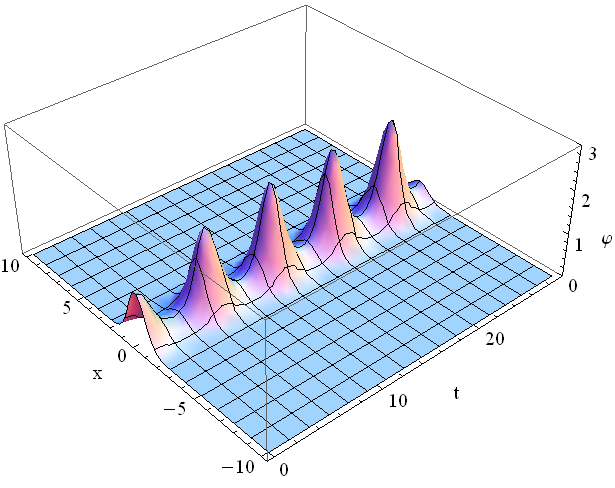}}
\subfigure[Contour of the solution $\varphi$.]{\includegraphics[scale=0.36]{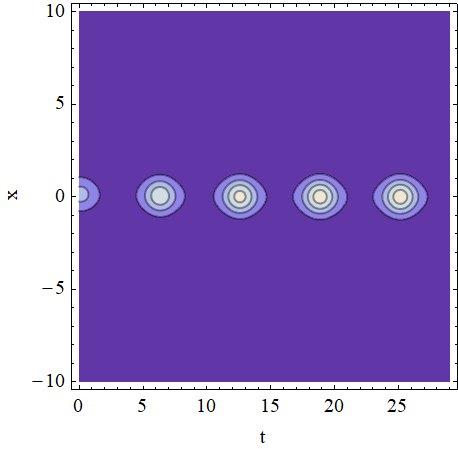}}
\caption{Solutions for the system  (\ref{Ex5a})-(\ref{Ex5b}) for the parameters $ b_1 = \frac{1}{8}, \ b_2 = 0$. Here, (b) and (d) show the contours of $\psi$ and $\varphi$ functions, respectively.} \label{Fig5}
\end{figure}

\subsubsection{\textbf{Constant diffusivity and an interaction coefficient with exponential growth}}
If we assume a system as follows
\begin{eqnarray}
\psi _{t} &=& \psi_{xx} + \left(4x^2 + 2 -\sin t  - b_1 \right)\psi + 4x\psi_x -e^{2x^2 -2\cos t}\psi \varphi^2 + b_1 e^{-x^2 + \cos t},  \label{Ex5a} \\
\varphi _{t} &=& \varphi_{xx} + \left(4x^2 + 2 -\sin t  - b_1 \right)\varphi + 4x\varphi_x +e^{2x^2 -2\cos t}\psi \varphi^2 \label{Ex5b}
\end{eqnarray}
then, Theorem \ref{Th3} allows the construction of solutions in the following closed form:
 {\footnotesize
\begin{equation}
  \psi(x,t) = \exp\left(-x^2 + \cos t\right)\left \{ \frac{3-\sqrt{1-4b_1}}{4} - \frac{\sqrt{2 + 2\sqrt{1-4b_1}-4b_1}}{4}\tanh \left( \frac{\sqrt{1 + \sqrt{1-4b_1}-2b_1}}{4}(x-\theta t) \right) \right \}, 
\end{equation}
\begin{equation}
  \varphi(x,t) = \exp\left(-x^2 + \cos t\right)\left \{ \frac{1+\sqrt{1-4b_1}}{4} + \frac{\sqrt{2 + 2\sqrt{1-4b_1}-4b_1}}{4}\tanh \left( \frac{\sqrt{1 + \sqrt{1-4b_1}-2b_1}}{4}(x-\theta t) \right) \right \},
\end{equation}}
with $\theta = \frac{\sqrt{2}(1-3\sqrt{1-4b_1})}{4}.$ The dynamics of these solutions can be seen in Figure \ref{Fig5}. 

\begin{figure}[h!]
\centering
\subfigure[Profile of the solution $\psi$.]{\includegraphics[scale=0.37]{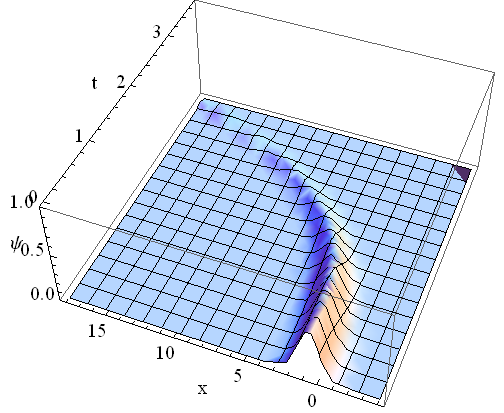}}
\subfigure[Contour of the solution $\psi$.]{\includegraphics[scale=0.37]{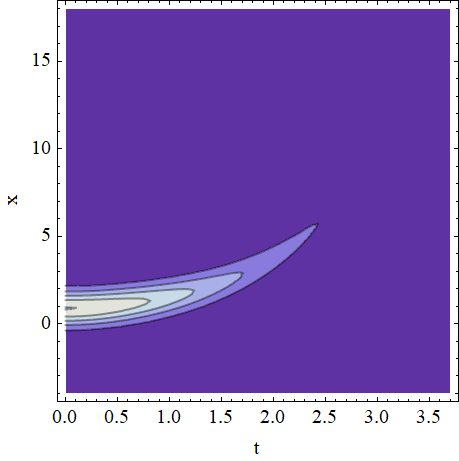}}
\subfigure[Profile of the solution $\varphi$.]{\includegraphics[scale=0.35]{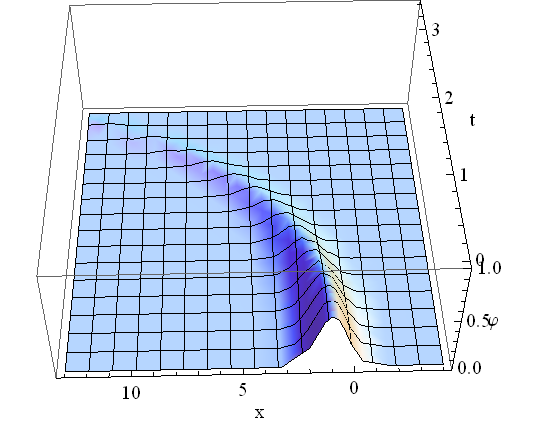}}
\subfigure[Contour of the solution $\varphi$.]{\includegraphics[scale=0.37]{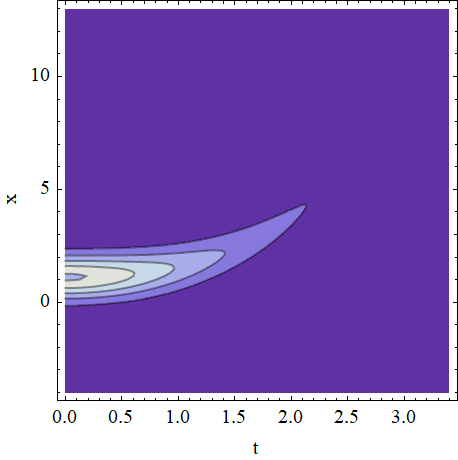}}
\caption{Solutions for the system  (\ref{Ex6a})-(\ref{Ex6b}) for the parameters $ b_1 = \frac{1}{8}, \ b_2 = 0$. The profiles of the functions $\psi$
and $\varphi$ are shown in (a) and (c). Figures (b) and (d) show the contours of $\psi$ and $\varphi$.} \label{Fig6}
\end{figure}

\subsubsection{\textbf{tanh-type diffusivity and an interaction coefficient with exponential growth}}
We consider in this example the general  Gray-Scott model
{\small
\begin{eqnarray}
\psi _{t} = \tanh t\left[\psi_{xx} - 4x^2 \psi + \psi - b_1 \sech^8 t \psi - \sech^6 t e^{2(x-\cosh t)^2}\psi \varphi^2 + b_1 \sech^9 t e^{-(x-\cosh t)^2} \right] \\ \quad \quad \quad \quad  + \sinh t \left(4x\psi -3 \psi_x\right), \label{Ex6a} 
\end{eqnarray}
\begin{eqnarray}
\varphi _{t} = \tanh t\left[\varphi_{xx} - 4x^2 \varphi + \varphi - b_1 \sech^8 t \varphi + \sech^6 t e^{2(x-\cosh t)^2}\psi \varphi^2 + b_1 \sech^9 t e^{-(x-\cosh t)^2} \right] \\ \quad \quad \quad \quad  + \sinh t \left(4x\varphi -3 \varphi_x\right). \label{Ex6b} 
\end{eqnarray}}
By solving the system (\ref{rica1})-(\ref{rica6}), we get
\begin{eqnarray*}
\alpha(t) = -1, \quad  \beta(t) = \sech^4 t, \quad  \gamma(t)= -\frac{\sech^8 t}{8}, \quad   \delta(t) = 2\cosh t, \\ \varepsilon(t) = -\frac{1}{3}\sech^3 t,  \quad \quad  \kappa(t) = -\cosh^2 t, \quad \quad \mu(t) = \cosh^2 t.
\end{eqnarray*}
 Then, the system (\ref{Ex6a})-(\ref{Ex6b}) admits a solution given by the equations
 {\footnotesize
\begin{equation}
  \psi(x,t) = \sech t e^{-(x - \cosh t)^2}\left \{ \frac{3-\sqrt{1-4b_1}}{4} - \frac{\sqrt{2 + 2\sqrt{1-4b_1}-4b_1}}{4}\tanh \left( \frac{\sqrt{1 + \sqrt{1-4b_1}-2b_1}}{4}(\xi-\theta \tau) \right) \right \}, 
\end{equation}
\begin{equation}
  \varphi(x,t) = \sech t e^{-(x - \cosh t)^2}\left \{ \frac{1+\sqrt{1-4b_1}}{4} + \frac{\sqrt{2 + 2\sqrt{1-4b_1}-4b_1}}{4}\tanh \left( \frac{\sqrt{1 + \sqrt{1-4b_1}-2b_1}}{4}(\xi-\theta \tau) \right) \right \},
\end{equation}}
with $\theta = \frac{\sqrt{2}(1-3\sqrt{1-4b_1})}{4},$ $\xi = x\sech^4 t - \frac{1}{3}\sech^3 t$, and $\tau = -\frac{1}{8}\sech^8 t.$  In this situation, we produce solutions with central axes bent to the right and decreasing in amplitude over time, see Figure \ref{Fig6}. This fascinating phenomenon results of the modulation of the hyperbolic cosine and the exponential decay induced by the interaction coefficient.

\subsection{Solutions for the generalized Burgers system }
This section concludes with the introduction of a novel coupled Burgers system that generalizes on previous models reported in the literature. In this context, the simplest version of the Riccati system must be used in the construction of the explicit solutions for such a model. To be more exact, the result says the following:
\begin{theorem}[\textbf{Generalized Burgers System}]\label{Th4}
The  variable coefficient coupled Burgers system 
\begin{eqnarray}
\psi _{t} &=&a(t)\psi_{xx} - b_{1}a(t)\psi \psi_x -c_{1}a(t)(\psi \varphi)_{x} + c(t)\left(\psi + x\psi_{x}\right) -g(t) \psi_{x},  \label{GBE1} \\
\varphi _{t} &=&a(t)\varphi_{xx}-b_{2}a(t)\varphi \varphi_{x} -c_{2}a(t)(\psi \varphi)_{x} + c(t)\left(\varphi + x\varphi_{x}\right) -g(t)\varphi_{x}  \label{GBE2}
\end{eqnarray}%
can be transformed into the Burgers system
\begin{equation}
  u_{\tau} = u_{\xi \xi} - b_{1}uu_{\xi}- c_{1}(uv)_{\xi}  \label{BE1}
\end{equation}
\begin{equation}
 v_{\tau} = v_{\xi \xi} -b_{2}vv_{\xi}- c_{2}(uv)_{\xi}.  \label{BE2}
\end{equation}
\end{theorem}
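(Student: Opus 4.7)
The plan is to adapt the similarity-transformation strategy from Theorems \ref{Th1}--\ref{Th3}, but with a strictly simpler ansatz because the target Burgers system (\ref{BE1})--(\ref{BE2}) contains no harmonic potential, no source, and no $x$-dependent forcing. Correspondingly the variable-coefficient system (\ref{GBE1})--(\ref{GBE2}) has $b(t)=d(t)=f(t)=0$, so one may set $\alpha \equiv \delta \equiv \kappa \equiv 0$ and dispense with the Gaussian-exponential prefactor entirely. Concretely, I would substitute
\begin{equation*}
\psi(x,t) = \beta(t)\, u(\xi,\tau), \qquad \varphi(x,t) = \beta(t)\, v(\xi,\tau), \qquad \xi = \beta(t)x + \varepsilon(t), \qquad \tau = \gamma(t),
\end{equation*}
where the amplitude factor and the spatial dilation are \emph{both} given by the same function $\beta(t)$. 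This coincidence is forced by the Burgers-type nonlinearity: an independent amplitude $A(t)$ would send $\psi\psi_x$ to $A^2\beta\, u u_\xi$, which matches the $u_{\xi\xi}$-coefficient $a\beta^2$ (after the time rescaling $\gamma'=a\beta^2$) only when $A\equiv\beta$.

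Next I would compute, via the chain rule,
\begin{equation*}
\psi_x = \beta^2 u_\xi, \qquad \psi_{xx} = \beta^3 u_{\xi\xi}, \qquad \psi_t = \beta' u + \beta\bigl(\gamma' u_\tau + (\beta' x + \varepsilon')u_\xi\bigr),
\end{equation*}
together with $\psi\psi_x = \beta^3 u u_\xi$ and $(\psi\varphi)_x = \beta^3 (uv)_\xi$. Substituting into (\ref{GBE1}) and dividing by $\beta$, I would collect the resulting terms according to the monomials $u,\ xu_\xi,\ u_\xi,\ u_{\xi\xi},\ u u_\xi,\ (uv)_\xi$ and equate to $\gamma'\bigl(u_{\xi\xi} - b_1 u u_\xi - c_1(uv)_\xi\bigr)$. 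This collapses to exactly three ODEs,
\begin{equation*}
\beta'(t) = c(t)\beta(t), \qquad \gamma'(t) = a(t)\beta^2(t), \qquad \varepsilon'(t) = -g(t)\beta(t),
\end{equation*}
where the first simultaneously annihilates the $u$-residue ($\beta'u - c\beta u$) and the $xu_\xi$-residue ($\beta'x u_\xi - c\beta x u_\xi$); the second matches the $u_{\xi\xi}$-coefficient and automatically calibrates both nonlinear ones (since $-b_i a\beta^2 = -b_i \gamma'$); and the third eliminates the remaining linear drift $-g\beta u_\xi$. Because (\ref{GBE2}) has identical linear structure and analogous nonlinear terms, the same three conditions are produced again, so $(u,v)$ satisfies (\ref{BE1})--(\ref{BE2}) simultaneously.

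Finally, I would observe that these three ODEs are exactly equations (\ref{Rica_2}), (\ref{Rica_3}), (\ref{Rica_5}) of the Riccati system (\ref{Rica_1})--(\ref{Rica_6}) restricted to the invariant slice $\alpha \equiv \delta \equiv 0$; this slice is consistent with the absence of $b(t)$, $d(t)$, $f(t)$ via (\ref{Rica_1}) and (\ref{Rica_4}). Hence the closed-form expressions for $\beta,\gamma,\varepsilon$ follow by direct quadrature (or by specializing the Appendix formulas (\ref{mu})--(\ref{kappa0})). There is no serious obstacle in this proof; the only conceptual point worth flagging is the rigidity, peculiar to quadratic Burgers-type nonlinearities, that locks the amplitude and the spatial dilation to the single function $\beta(t)$, a phenomenon that has no counterpart in the linear Theorems \ref{Th1} or \ref{Th1_1}.
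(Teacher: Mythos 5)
Your proposal is correct and follows essentially the same route as the paper: the same ansatz $\psi=\beta u$, $\varphi=\beta v$, $\xi=\beta x+\varepsilon$, $\tau=\gamma$, the same derivative computations, and the same three reduced conditions $\beta'=c\beta$, $\gamma'=a\beta^2$, $\varepsilon'=-g\beta$, identified as the Riccati system collapsed to $\alpha=\delta=\kappa=b=f=0$. Your extra remark explaining why the quadratic nonlinearity forces the amplitude factor to coincide with the dilation $\beta(t)$ is a nice justification of the ansatz that the paper simply posits, but it does not change the argument.
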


\begin{proof}
Consider the substitutions 
\begin{equation}
\psi (x,t) = \beta(t)u(\xi ,\tau ),\qquad \xi =\beta (t)x+\varepsilon (t),\qquad \tau
=\gamma (t),\label{SubsBE1}
\end{equation}
and
\begin{equation}
  \varphi (x,t) = \beta(t)v(\xi ,\tau ). \label{SubsBE2}
\end{equation}
Now, computing the first derivatives of $\psi$ (similarly the derivatives of $\varphi$ are obtained by replacing $u$ by $v$) we have:
\begin{equation}
 \psi_x = \beta^{2}(t)u_{\xi}, \quad \quad  \psi_{xx} = \beta^{3}(t)u_{\xi \xi},  \label{Deri_1}
\end{equation}
\begin{equation}
  \psi_t = \beta^{\prime}(t) u + \beta(t) \left[u_{\xi}\left(\beta^{\prime}(t)x + \varepsilon^{\prime}(t) \right)+ u_{\tau}\gamma^{\prime}(t)\right]. \label{Deri_2}  
\end{equation}
Inserting the last equations (\ref{Deri_1})-(\ref{Deri_2}) into the system (\ref{GBE1})-(\ref{GBE2}), one obtains the Burgers system (\ref{BE1})-(\ref{BE2}) by imposing the conditions 
\begin{equation}
 \frac{d \beta}{dt} = c(t)\beta(t), \quad \quad   \frac{d \gamma}{dt} = a(t)\beta^{2}(t), \quad \quad \frac{d \varepsilon}{dt} = -g(t)\beta(t). \label{ReducedRiccati}
\end{equation}
We point out that the system of ODEs (\ref{ReducedRiccati})  is a collapsed version with $\alpha = \delta = \kappa = b = f = 0$ of the classical Riccati system  (\ref{rica1})-(\ref{rica6}).
\end{proof}

\begin{figure}[h!]
\centering
\subfigure[Profile of the solution $\psi$.]{\includegraphics[scale=0.33]{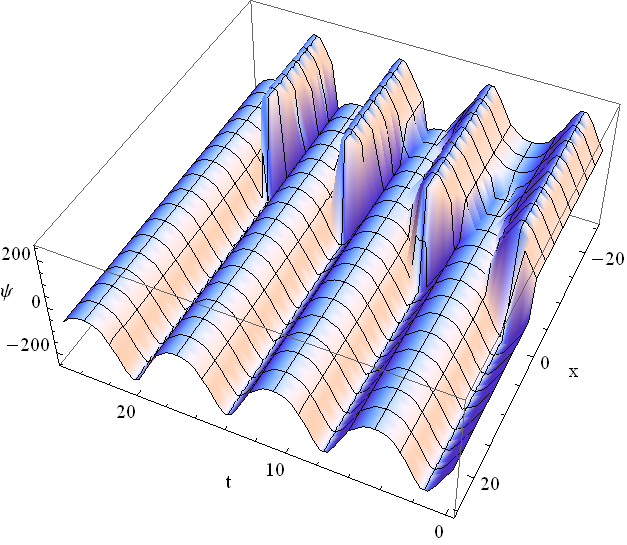}}
\subfigure[Contour of the solution $\psi$.]{\includegraphics[scale=0.33]{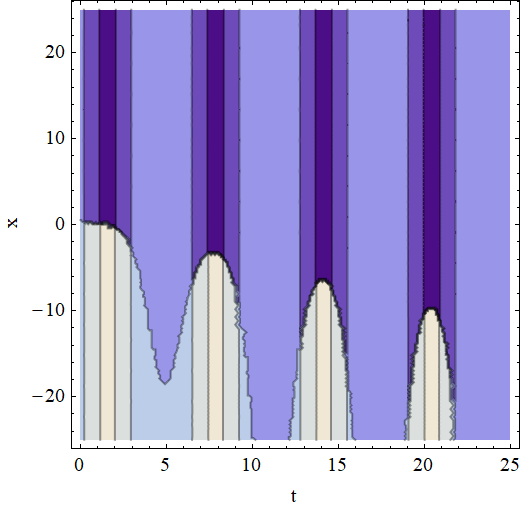}}
\subfigure[Profile of the solution $\varphi$.]{\includegraphics[scale=0.33]{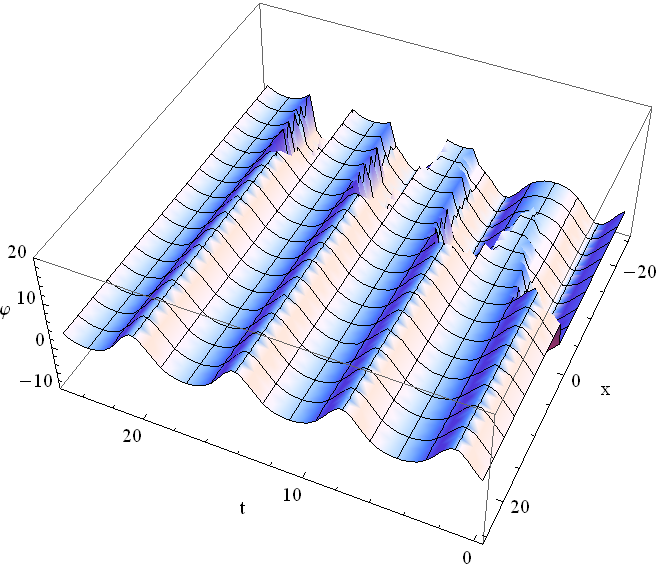}}
\subfigure[Contour of the solution $\varphi$.]{\includegraphics[scale=0.35]{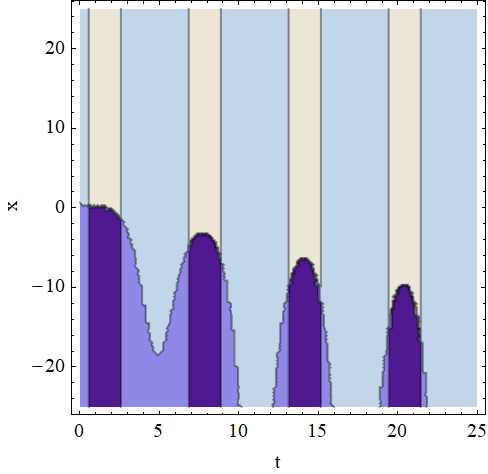}}
\caption{Solutions for the system  (\ref{Ex7a})-(\ref{Ex7b}) for the parameters $ b_1 = 2, \ b_2 = 1, \ c_1 = c_2 = 2$, $A = -\frac{111}{8},$ $B = -1,$ $\beta(0) = 1,$ $\gamma(0) = \varepsilon(0) = 0$.} \label{Fig7}
\end{figure}

In the preceding theorem, the integrability of the Burgers system is obtained by equalizing the diffusion term and the interaction coefficient, $a(t)$. Given the nature of the solutions expressed in (\ref{SubsBE1})-(\ref{SubsBE2}), $c(t)$ has an important influence on its dynamics (via the function $\beta(t)$), as it affects horizontal translations and amplitudes. The following examples provide insight into the dynamics of solutions when $c(t)$ has periodic structures.
\subsubsection{\textbf{Periodic diffusivity and interaction terms}}
In the present example we will consider the Burgers system
\begin{eqnarray}
\psi _{t} &=& \beta(0)^{-2}e^{-2\sin t}\left(\psi_{xx} - b_1 \psi \psi_x -c_1(\psi \varphi)_x \right) + \cos t \left( \psi + x \psi_x - \psi_x \right),  \label{Ex7a} \\
\varphi _{t} &=& \beta(0)^{-2}e^{-2\sin t}\left(\varphi_{xx} - b_2 \varphi \varphi_x -c_2(\psi \varphi)_x \right) + \cos t \left( \varphi + x \varphi_x - \varphi_x \right). \label{Ex7b}
\end{eqnarray}

\begin{figure}[h!]
\centering
\subfigure[Profile of the solution $\psi$.]{\includegraphics[scale=0.33]{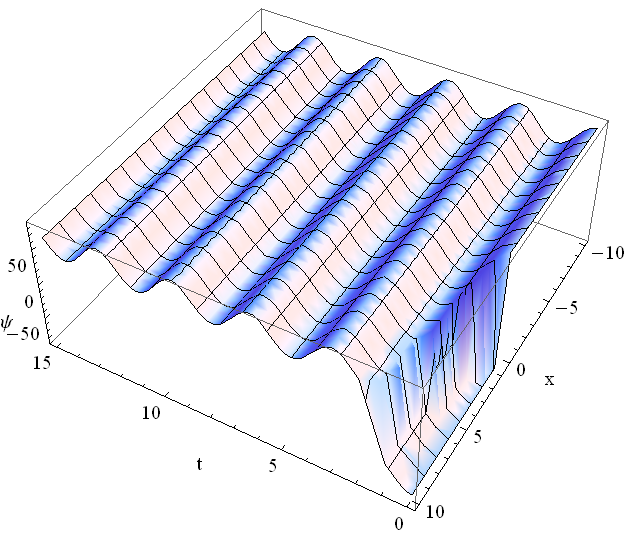}}
\subfigure[Contour of the solution $\psi$.]{\includegraphics[scale=0.34]{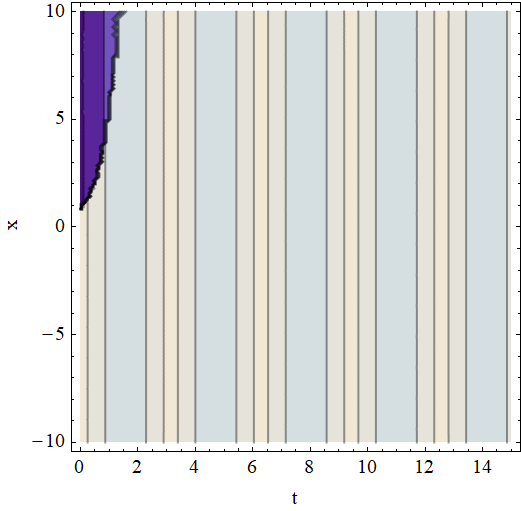}}
\subfigure[Profile of the solution $\varphi$.]{\includegraphics[scale=0.33]{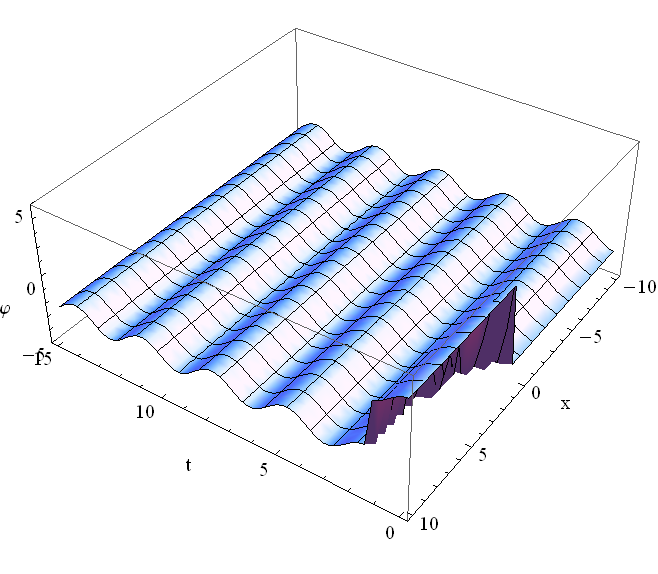}}
\subfigure[Contour of the solution $\varphi$.]{\includegraphics[scale=0.36]{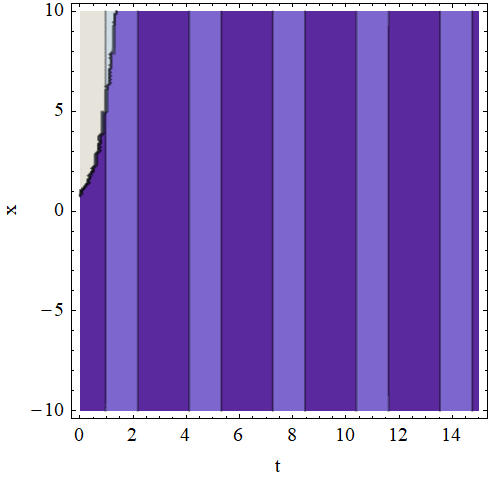}}
\caption{Solutions for the system  (\ref{Ex8a})-(\ref{Ex8b}) for the parameters $ b_1 = 2, \ b_2 = 1, \ c_1 = c_2 = 2$,  $A = \frac{111}{8},$ $B = 1,$ $\beta(0) = \gamma(0) = \varepsilon(0) = 1$.} \label{Fig8}
\end{figure}
The solution of the system (\ref{ReducedRiccati}) is given by the functions
\begin{eqnarray*}
 \beta(t) =  \beta(0)e^{\sin t}, \quad  \quad \gamma(t)= t + \gamma(0), \quad  \quad  \varepsilon(t) = \varepsilon(0) -\beta(0)\left(e^{\sin t}-1\right).
\end{eqnarray*}
 Then, we can construct  solutions for the system (\ref{Ex7a})-(\ref{Ex7b}) as follows
 {\footnotesize
\begin{equation}
  \psi = \beta(0)e^{\sin t}\left \{ B -2A\left(2c_1 -\frac{1}{4c_{1}c_{2}} -1\right)\tanh \left[A\left(20(\beta(0)e^{\sin t}(x-1)+ \beta(0) + \varepsilon(0)) -10 -2A(t + \gamma(0)) \right)\right]\right \}, 
\end{equation}
\begin{equation}
  \varphi = \beta(0)Be^{\sin t}\left(2c_1 - \frac{1}{4c_{1}c_{2}} -1\right)\tanh \left[A\left(20(\beta(0)e^{\sin t}(x-1)+ \beta(0) + \varepsilon(0)) -10 -2A(t + \gamma(0)) \right)\right]. 
\end{equation}
}

The dynamics of these solutions are shown in Figure \ref{Fig7}.

\subsubsection{\textbf{Diffusivity and interaction terms with exponential growth}}
Consider the system of equations
{\footnotesize
\begin{eqnarray}
\psi _{t} &=& \frac{\gamma(0)e^t}{\beta^2(0)\cos^2(\sin t)}\left(\psi_{xx} - b_1 \psi \psi_x -c_1(\psi \varphi)_x \right) - \tan(\sin t)\cos t \left( \psi + x \psi_x\right)-\frac{\varepsilon(0)\sin t}{\beta(0)\cos(\sin t)}\psi_x ,  \label{Ex8a} \\
\varphi _{t} &=& \frac{\gamma(0)e^t}{\beta^2(0)\cos^2(\sin t)}\left(\varphi_{xx} - b_2 \varphi \varphi_x -c_2(\psi \varphi)_x \right) - \tan(\sin t)\cos t \left( \varphi + x \varphi_x\right)-\frac{\varepsilon(0)\sin t}{\beta(0)\cos(\sin t)}\varphi_x. \label{Ex8b}
\end{eqnarray}}
Such a Burgers system admits the functions
\begin{eqnarray*}
 \beta(t) =  \beta(0)\cos(\sin t), \quad  \quad \gamma(t)=  \gamma(0)e^{t}, \quad  \quad  \varepsilon(t) = \varepsilon(0)\cos t.
\end{eqnarray*}
 Then, solutions can be constructed in the following  form:
 {\footnotesize
\begin{equation}
  \psi = \beta(0)\cos(\sin t)\left \{ B -2A\left(2c_1 -\frac{1}{4c_{1}c_{2}} -1\right)\tanh \left[A\left(20(\beta(0)\cos(\sin t)x+\varepsilon(0)\cos t) -10 -2A\gamma(0)e^{t}\right)\right]\right \}, 
\end{equation}
\begin{equation}
  \varphi = \beta(0)\cos(\sin t)\left(2c_1 - \frac{1}{4c_{1}c_{2}} -1\right)\tanh \left[A\left(20(\beta(0)\cos(\sin t)x+\varepsilon(0)\cos t) -10 -2A\gamma(0)e^{t}\right)\right]. 
\end{equation}}

Figure \ref{Fig8} shows the profiles of these solutions. 

\section{Conclusions and Final Remarks}
\label{Sect4}

In this paper, we introduced and investigated the explicit solutions of a coupled reaction-diffusion system and a coupled  Burgers-type system with variable coefficients. The structure of such  systems extends the classical linear reaction-diffusion model, the diffusive Lotka-Volterra  system, the Gray-Scott model, and the Burgers equations. We showed that if the coefficients fulfill a Riccati system, the general models possess solitary wave solutions as well as solutions with nontrivial behavior (in relation to solitary waves), presenting bending properties. 

We point out that exact solutions always play a crucial role for any nonlinear PDE (or system of PDEs) describing real-world processes. In fact, even exact solutions with questionable applications can be used as test problems to estimate the accuracy and efficiency of numerical methods. As a result, our research makes a significant contribution in this aspect because the level of trust in a numerical method for approximating thus general system of equations increases if it can accurately replicate the dynamics of more complex solutions.  

On the other hand, it is important to mention that the results reported here apply to any type of solution of the standard reaction-diffusion and Burgers systems. Likewise, by appropriately selecting the functions $F_i$ in the system (\ref{Gen1})-(\ref{Gen2}), it can include new reaction-diffusion models such as the Brusselator model \cite{RajniRohila2016,Nicolis1989}, the  isothermal chemical system \cite{Garcia1996,RajniRohila2016},  and the Noyes-Field model \cite{RODRIGO2001}. Therefore, the ideas proposed in this paper can be applicable to these new models.

\begin{acknowledgement}
J. M. Escorcia thanks Universidad EAFIT  for the financial support provided for this research (Internal Project No. 12330022023). 
\end{acknowledgement}

\section{Appendix: Solutions of the Riccati Systems }
\label{Sect5}
In this appendix, we present the solutions of the Riccati systems used in the construction of the explicit solutions of the various reaction-diffusion equations. 
\subsection{Solution of the Riccati system (\ref{Rica_1})-(\ref{Carac__1})}
A solution of this Riccati system including multiparameters is given by the following expressions \cite{CorderoSoto2008,Escorcia,Suazo009,Suslov12}:

\begin{equation}
\mu \left( t\right) =-2\mu \left( 0\right) \mu _{0}\left( t\right) \left(
\alpha \left( 0\right) +\gamma _{0}\left( t\right) \right) ,  \label{mu}
\end{equation}%
\begin{equation}
\alpha \left( t\right) =\alpha _{0}\left( t\right) -\frac{\beta
_{0}^{2}\left( t\right) }{4\left( \alpha \left( 0\right) +\gamma _{0}\left(
t\right) \right) },  \label{alpha}
\end{equation}%
\begin{equation}
\beta \left( t\right) =-\frac{\beta \left( 0\right) \beta _{0}\left(
t\right) }{2\left( \alpha \left( 0\right) +\gamma _{0}\left( t\right)
\right) },  \label{beta}
\end{equation}%
\begin{equation}
\gamma \left( t\right) =\gamma \left( 0\right) -\frac{\beta
^{2}\left( 0\right) }{4\left( \alpha \left( 0\right) +\gamma _{0}\left(
t\right) \right) },  \label{gamma}
\end{equation}%
\begin{equation}
\delta \left( t\right) =\delta _{0}\left( t\right) -\frac{\beta _{0}\left(
t\right) \left( \delta \left( 0\right) +\varepsilon _{0}\left( t\right)
\right) }{2\left( \alpha \left( 0\right) +\gamma _{0}\left( t\right) \right) 
},  \label{delta}
\end{equation}%
\begin{equation}
\varepsilon \left( t\right) =\varepsilon \left( 0\right) -\frac{\beta \left(
0\right) \left( \delta \left( 0\right) +\varepsilon _{0}\left( t\right)
\right) }{2\left( \alpha \left( 0\right) +\gamma _{0}\left( t\right) \right) 
},  \label{epsilon}
\end{equation}%
\begin{equation}
\kappa \left( t\right) =\kappa \left( 0\right) +\kappa _{0}\left( t\right) -%
\frac{\left( \delta \left( 0\right) +\varepsilon _{0}\left( t\right) \right)
^{2}}{4\left( \alpha \left( 0\right) +\gamma _{0}\left( t\right) \right) },
\label{kappa}
\end{equation}%
\ subject to the initial arbitrary conditions $\mu \left( 0\right) ,$ $%
\alpha \left( 0\right) ,$ $\beta \left( 0\right) \neq 0,$ $\gamma (0),$ $%
\delta (0),$ $\varepsilon (0)$ and $\kappa (0)$. Here, $\alpha _{0}$, $\beta _{0}$%
, $\gamma _{0}$, $\delta _{0}$, $\varepsilon _{0}$ and $\kappa _{0}$ are
given explicitly by\ 
\begin{equation}
\alpha _{0}\left( t\right) =-\frac{1}{4a\left( t\right) }\frac{\mu
_{0}^{\prime }\left( t\right) }{\mu _{0}\left( t\right) }-\frac{d\left(
t\right) }{2a\left( t\right) },  \label{alpha0}
\end{equation}%
\begin{equation}
\beta _{0}\left( t\right) =\frac{W\left( t\right) }{\mu _{0}\left( t\right) 
},\quad W\left( t\right) =\exp \left( \int_{0}^{t}\left( c\left( s\right)
-2d\left( s\right) \right) \ ds\right) ,  \label{beta0}
\end{equation}%
\begin{equation}
\gamma _{0}\left( t\right) =-\frac{1}{2\mu _{1}\left( 0\right) }\frac{\mu
_{1}\left( t\right) }{\mu _{0}\left( t\right) }+\frac{d\left( 0\right) }{%
2a\left( 0\right) },  \label{gamma0}
\end{equation}%
\begin{equation}
\delta _{0}\left( t\right) =\frac{W\left( t\right) }{\mu _{0}\left( t\right) 
}\ \ \int_{0}^{t}\left[ \left( f\left( s\right) +\frac{d\left( s\right) }{%
a\left( s\right) }g\left( s\right) \right) \mu _{0}\left( s\right) +\frac{%
g\left( s\right) }{2a\left( s\right) }\mu _{0}^{\prime }\left( s\right) %
\right] \ \frac{ds}{W\left( s\right) },  \label{delta0}
\end{equation}%
\begin{eqnarray}
\varepsilon _{0}\left( t\right) &=&-\frac{2a\left( t\right) W\left( t\right) 
}{\mu _{0}^{\prime }\left( t\right) }\delta _{0}\left( t\right)
-8\int_{0}^{t}\frac{a\left( s\right) \sigma \left( s\right) W\left( s\right) 
}{\left( \mu _{0}^{\prime }\left( s\right) \right) ^{2}}\left( \mu
_{0}\left( s\right) \delta _{0}\left( s\right) \right) \ ds  \label{epsilon0}
\\
&&+2\int_{0}^{t}\frac{a\left( s\right) W\left( s\right) }{\mu _{0}^{\prime
}\left( s\right) }\left[ f\left( s\right) +\frac{d\left( s\right) }{a\left(
s\right) }g\left( s\right) \right] \ ds,  \notag
\end{eqnarray}%
\begin{eqnarray}
\kappa _{0}\left( t\right) &=&-\frac{a\left( t\right) \mu _{0}\left( t\right) 
}{\mu _{0}^{\prime }\left( t\right) }\delta _{0}^{2}\left( t\right)
-4\int_{0}^{t}\frac{a\left( s\right) \sigma \left( s\right) }{\left( \mu
_{0}^{\prime }\left( s\right) \right) ^{2}}\left( \mu _{0}\left( s\right)
\delta _{0}\left( s\right) \right) ^{2}\ ds  \label{kappa0} \\
&&\quad +2\int_{0}^{t}\frac{a\left( s\right) }{\mu _{0}^{\prime }\left(
s\right) }\left( \mu _{0}\left( s\right) \delta _{0}\left( s\right) \right) %
\left[ f\left( s\right) +\frac{d\left( s\right) }{a\left( s\right) }g\left(
s\right) \right] \ ds,  \notag
\end{eqnarray}%
\ with $\delta _{0}\left( 0\right) =g_{0}\left( 0\right) /\left( 2a\left(
0\right) \right) ,$ $\varepsilon _{0}\left( 0\right) =-\delta _{0}\left(
0\right) ,$ $\kappa _{0}\left( 0\right) =0.$ Here $\mu _{0}$ and $\mu _{1}$
represent the fundamental solution of the characteristic equation subject to
the initial conditions $\mu _{0}(0)=0$, $\mu _{0}^{\prime }(0)=2a(0)\neq 0$
and $\mu _{1}(0)\neq 0$, $\mu _{1}^{\prime }(0)=0$.

We point out that all the formulas involved in such a solution have been verified previously in \cite{Ko-su-su}. 

\subsection{Solution of the modified Riccati system (\ref{Ricati1})-(\ref{Ricati7})}
In order to find the solution of this Riccati system, we just need to solve the equations for the functions $\kappa_1(t)$ and $\kappa_2(t) $ because the rest of the equations have been solved previously. Suppose $\kappa(t) = \kappa_1(t) + \kappa_2(t).$ Then, the function $\kappa(t)$ satisfies the equation (\ref{Rica_6}) of the earlier Riccati system, i.e.,  
$$ \frac{d \kappa}{dt} = a(t)\delta^2(t)-g(t)\delta(t).   $$
On other hand, the equation for $\kappa_2(t)$ can be solved explicitly:
$$ \kappa_2(t) = -\ln \Big|\int_{0}^{t}h(s) \ ds + e^{-\kappa_2(0)} \Big|. $$
In these terms, using equation (\ref{kappa}), we have
$$ \kappa_1(t) = \kappa(t)-\kappa_2(t) = \kappa_1 \left( 0\right) + \kappa_2 \left( 0\right) +\kappa _{0}\left( t\right) -
\frac{\left( \delta \left( 0\right) +\varepsilon _{0}\left( t\right) \right)
^{2}}{4\left( \alpha \left( 0\right) +\gamma _{0}\left( t\right) \right) } + \ln \Big|\int_{0}^{t}h(s)ds + e^{-\kappa_2(0)} \Big|    ,$$
and we solved completely the system (\ref{Ricati1})-(\ref{Ricati7}).

\bibliography{references}
\bibliographystyle{plain} 


\end{document}